\newtheorem{theorem}{Theorem}[section]
\newtheorem{lemma}{Lemma}[section]
\newtheorem{claim}{Claim}[section]
\newtheorem{proposition}{Proposition}[section]
\newtheorem{observation}{Observation}[section]
\newcommand{\opt}{\ensuremath{\textsf{OPT}}}
\newcommand{\cost}{\ensuremath{\textsf{cost}}}
\newcommand{\poly}{\ensuremath{\textsf{poly}}}
\newcommand{\abs}[1]{\ensuremath{|#1|}}
\newcommand{\ball}{{\sf ball}}
\newcommand{\km}{\textsc{$k$-Median}\xspace}
\newcommand{\kzc}{\textsc{$(k,z)$-Clustering}\xspace}
\newcommand{\kmns}{\textsc{$k$-Means}\xspace}
\newcommand{\kc}{\textsc{$k$-Center}\xspace}
\newcommand{\sfm}{\textsc{Socially Fair $k$-Median}\xspace}
\newcommand{\fpt}{\textsf{FPT}\xspace}
\newcommand{\wone}{\textsf{W}$[1]$\xspace}
\newcommand{\mcis}{\textsc{Multi-Colored Independent Set}\xspace}
\newcommand{\rkc}{\textsc{Robust $(k,z)$-Clustering}\xspace}
\newcommand{\rkm}{\textsc{Robust $k$-Means}\xspace}
\newcommand{\an}{\textsf{ALG}_{\textnormal{n}}\xspace}
\newcommand{\af}{\textsf{ALG}_{\textnormal{f}}\xspace}
\newcommand{\on}{\textsf{OPT}_{\textnormal{n}}\xspace}
\newcommand{\of}{\textsf{OPT}_{\textnormal{f}}\xspace}
\newcommand{\bicf}{\textsf{BIC}_{\textnormal{f}}\xspace}
\newcommand{\bicn}{\textsf{BIC}_{\textnormal{n}}\xspace}
\newcommand{\bic}{\textsf{BIC}\xspace}
\newcommand{\interior}{{\sf interior}}
\newcommand{\defproblemout}[3]{
  \vspace{1mm}
\noindent\fbox{
  \begin{minipage}{0.96\textwidth}
  \begin{tabular*}{\textwidth}{@{\extracolsep{\fill}}lr} #1 \\ \end{tabular*}
  {\bf{Input:}} #2  \\
  {\bf{Output:}} #3
  \end{minipage}
  }
  \vspace{1mm}
}
\newcommand{\OO}{\mathcal{O}}
\newcommand{\bN}{\ensuremath{\mathbb{N}}\xspace}
\newcommand{\bR}{\ensuremath{\mathbb{R}}\xspace}
\newcommand{\bRd}{\ensuremath{\mathbb{R}^d}\xspace}
\newcommand{\bigO}[1]{\ensuremath{\mathcal{O}\left(#1\right)}}
\newcommand{\rd}{\mathbb{R}^d}
\newcommand{\vecw}{{\bf w}\xspace}
\newcommand{\vecd}{{\boldsymbol{\delta}}}
\newcommand{\vcw}{{\bf w}\xspace}
\newcommand{\cB}{\ensuremath{\mathcal{B}}\xspace}
\newcommand{\cE}{\ensuremath{\mathcal{E}}\xspace}
\newcommand{\cI}{\ensuremath{\mathcal{I}}\xspace}
\newcommand{\cM}{\ensuremath{\mathcal{M}}\xspace}
\newcommand{\cQ}{\ensuremath{\mathcal{Q}}\xspace}
\newcommand{\cW}{\ensuremath{\mathcal{W}}\xspace}
\newcommand{\kms}{\textsc{$k$-Means}}
\newcommand{\kmd}{\textsc{$k$-Median}}
\newcommand{\kcen}{\textsc{$k$-Center}}
\newtheoremstyle{special}
    {\topsep}
    {\topsep}
    {\itshape}    
    {}
    {\bfseries}
    {\,\((\star)\).}
    {.5em}
    {}
\title{Parameterized Approximation for Robust Clustering\\ in Discrete Geometric Spaces}
\author{
Fateme Abbasi\footnote{University of Wroc\l{}aw, Poland (\texttt{fateme.abbasi@cs.uni.wroc.pl})} \and 
Sandip Banerjee\footnote{IDSIA, USI-SUPSI, Switzerland (\texttt{sandip.ndp@gmail.com})} \and 
Jaros\l{}aw Byrka\footnote{University of Wroc\l{}aw, Poland (\texttt{jby@cs.uni.wroc.pl})} 
\end{tabular} \endgraf
\begin{tabular}[t]{c}
Parinya Chalermsook\footnote{Aalto University, Finland (\texttt{parinya.chalermsook@aalto.fi})} \and
Ameet Gadekar\footnote{Bar-Ilan University, Israel (\texttt{ameet.gadekar@biu.ac.il})} \and
Kamyar Khodamoradi\footnote{University of Regina, Canada (\texttt{kamyar.khodamoradi@uregina.ca})} 
\end{tabular} \endgraf
\begin{tabular}[t]{c}
D\'{a}niel Marx\footnote{CISPA Helmholtz Center for Information Security, Saarbr\"{u}cken, Germany (\texttt{marx@cispa.de})} \and
Roohani Sharma\footnote{University of Bergen, Norway (\texttt{rsharma@uib.no})} \and
Joachim Spoerhase\footnote{University of Liverpool, United Kingdom (\texttt{joachim.spoerhase@liverpool.ac.uk})}
}
\date{}
\begin{document}

\maketitle

\begin{abstract}

We consider the well-studied \rkc problem, which generalizes the classic \kmd, \kmns, and \kcen~problems and arises in the domains of robust optimization [Anthony, Goyal, Gupta, Nagarajan, Math. Oper. Res. 2010] and in algorithmic fairness [Abbasi, Bhaskara, Venkatasubramanian, 2021 \& Ghadiri, Samadi, Vempala, 2022]. Given a constant $z\ge 1$, the input to \rkc is a set $P$ of $n$  points in a metric space $(M,\delta)$, a weight function $w: P \rightarrow {\mathbb R}_{\geq 0}$ and a positive integer $k$. Further, each point belongs to one (or more) of the $m$ many different groups $S_1,S_2,\ldots,S_m \subseteq P$. Our goal is to find a set $X$ of $k$ centers such that $\max_{i \in [m]} \sum_{p \in S_i} w(p) \delta(p,X)^z$ is minimized. 

 Complementing recent work on this problem, we give a comprehensive understanding of the 
 parameterized approximability of the problem in geometric spaces where the parameter is the number~$k$ of centers. We prove the following results: 
 \begin{enumerate}[(i)]
  \item For a universal constant $\eta_0 >0.0006$, we devise a $3^z(1-\eta_{0})$-factor \fpt approximation algorithm for \rkc in \emph{discrete} high-dimensional Euclidean spaces where the set of potential centers is finite. This shows that the lower bound of~$3^z$ for general metrics  [Goyal, Jaiswal, Inf. Proc. Letters, 2023] no longer holds when the metric has geometric structure.
 \item We show that \rkc in discrete Euclidean spaces is $(\sqrt{3/2}- o(1))$-hard to approximate for \fpt algorithms, even if we consider the special case \kc in logarithmic dimensions. This rules out a 
 $(1+\epsilon)$-approximation algorithm running in time $f(k,\epsilon)\poly(m,n)$ (also called efficient parameterized approximation scheme or EPAS), giving a striking contrast with the recent EPAS for the \emph{continuous} setting where centers can be placed anywhere in the space [Abbasi et al., FOCS'23].
 \item However, we obtain an EPAS for \rkc in discrete Euclidean spaces when the dimension is sublogarithmic (for the discrete problem, earlier work [Abbasi et al., FOCS'23] provides an EPAS only in dimension $o(\log\log n)$). Our EPAS works also for metrics of sub-logarithmic doubling dimension.  
\end{enumerate}    
\end{abstract}



\section{Introduction}\label{sec:introduction}

Clustering is a crucial method in the analysis of massive datasets and has widespread applications in operations research and machine learning. As a consequence, optimization problems related to clustering have received significant attention from the theoretical computer science community over the years. Within the framework of center-based clustering, \kcen , \kms, and \kmd ~\cite{HS85,JV01,BPRST17,AhmadianNSW17,kanungo2004local} are widely regarded as the most fundamental problems. 

A general notion that captures various classic clustering problems is referred to as \kzc in the literature, where $z\ge 1$ is a constant. In this type of problem, the input is a set $P$ of data points (clients), a set $F$ of centers (facilities), a metric $\delta$ on $P \cup F$, and a positive integer $k$. The goal is to find a set $C\subseteq F$ of $k$ facilities that minimizes the following cost function:
\begin{equation*}
    cost(C)=\sum_{p\in P}cost(p,C)
\end{equation*}
where $cost(p,C)=\delta(p,C)^z$ and $\delta(p,C)=\min_{c\in C} \delta(p,c)$. Note that $\kzc$ encapsulates the classical \kmd, and \kms~for $z=1$ and $z=2$, respectively .

Center-based clustering has cemented its place as an unsupervised learning method that has proven effective in modeling a variety of real-world problem. In most of the practical machine learning applications however, it is observed that the input data is rarely of high quality.

To tackle this challenge, we study a robust version of \kzc in this paper which can handle uncertainty in the input: Consider a situation where we do not have complete knowledge about the clients that will be served. In order to perform well despite this uncertainty, Anthony et al.~\cite{AGGN10} defined a concept of robustness for the \kmd~problem, in which each possible scenario is represented by a group of clients and the goal is to find a solution that performs best possible even in the worst scenario. In this paper, we address the following robust version of the \kzc problem (called \rkc): \\

\defproblemout{\rkc}{Instance $(P,F,\delta)$ with $\delta$ being a metric on $P \cup F$, positive integer $k$, a weight function $w\colon P \rightarrow {\mathbb R}_{+}$, and $m$ groups $S_1,\ldots, S_m$ such that $S_i \subseteq P, P= \cup_{i \in [m]} S_i$.}{A $k$-element subset $X \subseteq F$ that minimizes $\max_{i \in [m]} \sum_{p \in S_i} w(p) \delta(p,X)^z$.}\\

Let $n = |P|$. We remark that, in addition to generalizing \kmd~and \kmns, the \rkc problem encapsulates \kcen, when each group contains a distinct singleton.
A similar objective has been studied in the context of {\it fairness}, in which we aim to create a solution that will be appropriate for each of the specified groups of people. This problem is known in the literature as \sfm, recently introduced independently by Abbasi et al.~\cite{ABV21} and Ghadiri et al.~\cite{ghadiri2021socially}.  Notice that Abbasi et al.~\cite{ABV21} introduce fair clustering with client weights being inversely proportional to the group size as a normalization. On the other hand, Anthony et al.~\cite{AGGN10} introduce robust clustering with unweighted clients. Since our definition allows arbitrary client weights, we capture both of these settings.

While \kms, \kmd, and \kc admit constant-factor approximations, it is not very surprising that \rkc is harder due to its generality: Makarychev and Vakilian~\cite{pmlr-v134-makarychev21a} design
a polynomial-time $\bigO{\log m/ \log \log m}$-approximation algorithm, which is tight under a plausible complexity assumption~\cite{bhattacharya2014new}\footnote{Note that they proved this factor for \textsc{Robust }\kmd, and the hardness result holds even in the line metric, unless $\textsf{NP} \subseteq \cap_{\delta >0} \textsf{DTIME}(2^{n^\delta})$.}. 
As this precludes the existence of efficient constant-factor approximation algorithms, recent works have focused on designing constant factor \emph{parameterized} (\fpt) approximation algorithms\footnote{ Throughout the paper, parameterization refers to the natural parameter $k$.}. Along these lines, an \fpt time $(3^z+\epsilon)$-approximation algorithm has been proposed and shown to be tight under the Gap Exponential-Time Hypothesis (Gap-ETH)~\cite{goyal2021tight}. When allowing a parameterization on the number of groups $m$ (instead of $k$), Ghadiri et al. designed a $(5 + 2 \sqrt{6} + \epsilon)^z$-approximation algorithm in $n^{\bigO{m^2}}$ time~\cite{ghadiri2022constant}.   

Motivated by the tight lower bounds for general discrete metrics, we focus on \emph{geometric}  spaces.  
Geometric spaces have a particular importance in real-world applications because data can often be represented via a (potentially large) collection of numerical attributes, that is, by vectors in a (possibly high-dimensional) geometric space. For example, in the bag-of-words model a document is represented by a vector where each coordinate specifies the frequency of a given word in that document. Such representations naturally lead to very high-dimensional data. A setting of particular interest is the  high-dimensional \emph{Euclidean space} where the metric is simply the Euclidean metric $\delta(x,y) = ||x-y||_2$.

The study of clustering problems in high-dimensional Euclidean space is an important line of research that has received significant attention in the algorithms community. It may seem intuitive to believe that it should generally (for almost any problem) be possible to algorithmically leverage the geometric structure to separate high-dimensional Euclidean from general metrics. For clustering, however, this turns out to be either false or highly non-trivial in many cases. For example, it is a long-standing open question~\cite{feder-greene88:approx-clustering} whether \kc admits a polynomial time $(2-\epsilon)$-approximation algorithm even in $\bR^2$, improving the tight bound of $2$ in general metrics. Interestingly enough, for the more general Euclidean \textsc{$k$-Supplier} problem, Nagarajan et al.~\cite{nagarajan-etal20:euclidean-k-supplier} obtain an improvement over the tight bound of $3$ in general metrics. The improved bounds for Euclidean \km and \kmns by Ahmadian et al.~\cite{AhmadianNSW17},  Grandoni et al.~\cite{GrandoniORSV22}, and recently by Cohen-Addad et al.~\cite{cohen-addad-etal22:improved-euclidean-k-median-k-means} were breakthroughs.
Concerning the more general \rkc, the tight inapproximability bound of $\Omega(\log m/ \log \log m)$ in general metric continues to hold even in the line metric~\cite{bhattacharya2014new}.

Similarly, the regime of \fpt approximation algorithms for Euclidean clustering problems has received significant attention. Classic works design an Efficient Parameterized Approximation Scheme (EPAS), that is, a $(1+\epsilon)$-approximation in $f(k,\epsilon) \poly(n)$ time, for \kc~\cite{badoiu-etal:approximate-clustering-coresets} as well as for \km and \kmns~\cite{kumar2010linear}. Recent research focuses on the design of so-called coresets~\cite{sohler-woodruff18:coresets-k-median,cohen-addad-etal21:coreset-framework} whose existence implies an EPAS if their size only depends on $k$ and the error parameter $\epsilon$.

In the real space $\bRd$, it is important to distinguish between the \emph{discrete} and the \emph{continuous} settings. In the discrete setting, both the point set $P$ and the candidate center set $F$ are finite subsets of ${\mathbb R}^d$ while in the continuous setting, centers can be chosen anywhere in the metric space, that is, $F = {\mathbb R}^d$. A separate line of research has studied the contrast between continuous and discrete versions. For example, while discrete clustering variants are clearly polynomial-time solvable for constant $k$ by trivial enumeration, the continuous versions of \kc and \kmd~are known to be NP-hard even for $k=2$~\cite{drineas-etal04:np-hardness-2-means-highdim} in high-dimensional Euclidean space. Also in terms of polynomial-time approximability, stronger lower bounds were shown by Cohen-Addad et al.~\cite{cohen2021approximability} for the continuous versions. Indeed, there have been systematic research efforts in understanding these geometric clustering problems~\cite{cohen2022johnson,cohen2019inapproximability,cohen2021approximability}.
A recent result~\cite{abbasi-etal23:epas-norm-clustering} implies an EPAS for \rkc in continuous Euclidean spaces (of any dimension), as well as in discrete Euclidean spaces in ``relatively low'' dimension, that is, dimension $o(\log \log n)$. 

The main goal of this paper is to develop comprehensive understanding for \rkc in high-dimensional discrete Euclidean spaces, in particular, when the dimension is at least $\Omega(\log \log n)$.

\bigskip
\noindent\textbf{Remark:} A preliminary version of this paper appeared at ICALP 2024 \cite{Icalpabbasi}. This manuscript represents the full version

\subsection{Our contributions}

First, motivated by a factor-$(3^z-o(1))$ hardness of \fpt approximation for \rkc in general metrics~\cite{goyal2021tight}, a natural question is whether the structures of Euclidean spaces can be leveraged to obtain better results in high dimensions. 
While it is intuitive to believe that such an improvement should generally (for almost any problem) be possible in geometric spaces, we note that this is sometimes not the case: The polynomial time inapproximability of $\rkc$ remains $\Omega(\log m/ \log \log m)$ even in the line metric~\cite{bhattacharya2014new}.

Our first result gives an affirmative answer to this question. 

\begin{restatable}[High-Dimensional Euclidean Space]{theorem}{constapproxsfcz}
\label{thm:disfkz}
There exists a universal constant  $\eta_0 >0.0006$ such that for any constant positive integer $z$, there is a factor $3^z(1-\eta_0)$ \fpt approximation algorithm for \rkc in discrete Euclidean space ${\mathbb R}^d$ that runs in time
$2^{\bigO{k \log k}} \poly(m,n,d)$.
\end{restatable}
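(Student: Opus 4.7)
The plan is to combine the standard FPT leader-enumeration framework underlying the known $(3^z+\epsilon)$-approximation with a refined, Euclidean-geometry-aware facility assignment step, pushing the worst-case ratio strictly below the barrier $3^z$. Concretely, I would invoke the standard leader-sampling framework: starting from a polynomial-time constant-factor approximation (e.g.\ that of~\cite{pmlr-v134-makarychev21a}) and performing $D^z$-type sampling, one obtains a pool $Q \subseteq P$ of size $\bigO{k \log k}$ such that, with constant probability, there is a tuple $L = (\ell_1,\dots,\ell_k)$ of points from $Q$ with $\ell_i$ lying in the $i$-th optimal cluster and $\delta(\ell_i,c_i^*)^z$ comparable to the average point-to-optimal-center cost in that cluster. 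Enumerating all ordered $k$-tuples of $Q$ contributes the $2^{\bigO{k \log k}}$ factor in the running time.

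For each candidate tuple $L$, the naive choice---take the nearest facility $f_i \in F$ to each $\ell_i$---already realizes the $3^z$ factor via the triangle inequality. To go below it, I would enlarge the facility-choice space by enumerating, for every leader $\ell_i$, a $\bigO{1}$-size family of alternative candidate facilities from $F$ (e.g.\ nearest facilities to a small, carefully chosen set of displacements of $\ell_i$), and then trying every resulting $k$-tuple of candidates; the output is the best such $k$-tuple across all guessed $L$, measured by the max-group cost. The whole enumeration stays within $2^{\bigO{k \log k}} \poly(m,n)$ time.

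The heart of the proof is a per-cluster geometric lemma exploiting the Euclidean identity
\[
\|u+v\|^2 \;=\; \|u\|^2 + \|v\|^2 + 2\|u\|\|v\|\cos\theta,
\]
applied with $u = c_i^* - p$ and $v = \ell_i - c_i^*$. This quantifies the slack in the triangle inequality $\delta(p,\ell_i) \le \delta(p,c_i^*) + \delta(c_i^*,\ell_i)$ as a function of the angle $\theta$ between $u$ and $v$. The target statement is that, for every optimal cluster $i$ and every group $S_j$, \emph{either} (a) the naive facility $f_i$ already attains weighted cost at most $3^z(1-\eta_0)$ times the cost of $c_i^*$ on $S_j$ restricted to cluster $i$, because the angular distribution of the cluster points forces enough cumulative triangle-inequality slack; \emph{or} (b) one of the alternative enumerated candidates, lying close to $c_i^*$, achieves the same guarantee.

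The step I expect to be the main obstacle is enforcing the $(1-\eta_0)$ savings \emph{per group}, rather than only on the aggregate sum, as dictated by the max-objective of \rkc; a gain that is obtained by averaging across clusters---perfectly admissible for sum-based \kzc---cannot be used here. This forces a bound that holds on each group-cluster pair separately and therefore a delicate angular threshold in the case split. The explicit value $\eta_0 > 0.0006$ should emerge from optimizing this threshold against the parallelogram-law slack, and I do not expect it to admit a clean closed form.
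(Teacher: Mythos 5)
Your proposal takes a genuinely different route from the paper, and the route as sketched has a gap at its core.

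The paper does not use leader sampling from $P$. Instead it starts from a bicriteria solution $B \subseteq F$ (from the LP of Goyal--Jaiswal), enlarges it to its \emph{midpoint closure} ${\sf cl}(B) = B \cup \{\pi_F(\tfrac{b+b'}{2}) : b,b' \in B\}$, and shows (Lemma~\ref{lem:assignment}) that for every $\beta_0$-far point $p$ there is an assignment $\sigma\colon O \to {\sf cl}(B)$ with $\delta(p,\sigma(O)) \le (1-\epsilon_0)(2\delta(p,O) + \delta(p,B))$. This is a \emph{per-point} guarantee, which is precisely how the paper neutralizes the per-group difficulty you correctly flag: since the inequality holds pointwise, it transfers to every group's weighted sum without any averaging across clusters. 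The concrete geometric mechanism is: for $o \in O$ with $b = \pi_B(o)$, look at the mirror image $q$ of $b$ through $o$. If $B$ has a facility $b'$ near $q$, then $(b+b')/2$ is within $\alpha/2$ of $o$, and because $o$ itself lies in $F$, the projection $\pi_F\left(\tfrac{b+b'}{2}\right)$ is within $\alpha$ of $o$ --- so there is a genuine candidate in ${\sf cl}(B)$ near the optimum center. If instead $\ball(q,\alpha) \cap B = \emptyset$, a case analysis over four regions shows that any $\beta_0$-far $p$ already enjoys slack against $2\delta(p,O)+\delta(p,B)$.

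The gap in your proposal is in the source of the ``alternative candidate facilities''. You say you would enumerate nearest facilities to ``a small, carefully chosen set of displacements of $\ell_i$'', but there is no mechanism guaranteeing any of these displacements lands near $c_i^*$: in a discrete instance the nearest facility to $c_i^*$ can be far, and displacing a leader $\ell_i \in P$ by generic vectors gives no control over hitting $F$ near the optimum. The paper's midpoint trick is exactly the device that creates such a candidate, and it crucially uses both that $O \subseteq F$ (to bound the projection error) and that $B$ consists of facilities (so midpoints are meaningful). Your branch~(a) --- that the naive facility already saves via angular slack --- cannot carry the argument on its own: the paper explicitly shows (Figure~\ref{pic:tight}) that the projection lemma is tight on the line, i.e.\ for collinear configurations the angular slack is zero, which is realizable even in high dimension. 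So branch~(b) must do the work, and without a concrete construction like the midpoint closure it has no content. The per-group observation you make is the right worry, but the fix is not a finer angular threshold; it is a pointwise assignment guarantee.
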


We remark that, first, our running time has only a polynomial dependency on $d$. Secondly, the key take-home message for Theorem~\ref{thm:disfkz} is not about a concrete approximation factor, but rather a ``proof of concept'' that the factor of $3^z$ can be improved.
Conceptually, this result shows that geometric spaces are indeed easier for \rkc than general metric spaces in the \fpt world, in contrast to the polynomial-time world, where they seem to be equally hard~\cite{bhattacharya2014new}.
The proof of this theorem relies on a new geometric insight that leverages the properties of Euclidean spaces (that do not hold in general metric spaces).
The analysis of our algorithms ``reduces'' the global analysis of approximation factor to a ``local'' geometric instance, in which it suffices to merely analyze the behavior of three points in the Euclidean spaces.

Next, we focus on obtaining a complete characterization of the existence of EPAS in discrete Euclidean spaces. 
Recall that an EPAS exists in continuous Euclidean spaces of any dimensions and in discrete Euclidean spaces of dimension $o(\log \log n)$~\cite{abbasi-etal23:epas-norm-clustering}, so to complete the landscape, we need to understand the discrete Euclidean spaces of dimension $\Omega(\log \log n)$.  

In the next theorem, we prove that even the special case of \kc does not admit an EPAS. This hardness holds for any $\ell_q$ metric and even in dimension $O(k \log n)$. More formally, we prove the following theorem.

\begin{restatable} [Hardness in Discrete Euclidean Space]{theorem}{hardnesseuc}\label{thm: LB-intro}
    For any constant positive integer $q$ and any positive constant $\eta>0$, there exists a function $d(k,n)=O(k\log n)$ such that there is no factor-$(3/2-\eta)^{1/q}$ $\fpt$ approximation algorithm for the discrete \kc problem in $\bR^{d(k,n)}$  under the $\ell_q$ metric unless $\wone=\fpt$. Moreover, for the $\ell_2$ metric this hardness holds even for some dimension $O(\log n)$, that is, independently of $k$.
\end{restatable}

Our result therefore highlights the interesting contrast between the discrete and continuous settings in high-dimensional Euclidean spaces, which has been systematically studied in recent years~\cite{cohen2022johnson,cohen2019inapproximability,cohen2021approximability}.
As mentioned, the continuous setting admits an EPAS~\cite{abbasi-etal23:epas-norm-clustering}, so  our hardness result  implies that the discrete setting is harder than the continuous counterpart. 
This is contrast to the results Cohen-Addad et al.~\cite{cohen2021approximability} mentioned earlier showing that continuous variants of \km and \kmns in geometric spaces  are apparently harder to approximate (in polynomial time) than their discrete part as well as the different complexity status of continuous and discrete clustering in high-dimensional spaces even for $k=2$~\cite{drineas-etal04:np-hardness-2-means-highdim}. This shows a rather mysterious behavior of clustering problems in geometric spaces.

Our next theorem completes the \fpt-approximability landscape by designing an EPAS for the problem in doubling metrics of dimension $d = o_k(\log n)$\footnote{We use notation $o_k(\cdot)$ to hide multiplicative factors depending only on $k$.}. We remark that the doubling dimension of the $d$-dimensional discrete Euclidean metric is $\Theta(d)$, that is, we obtain an EPAS for discrete Euclidean $o_k(\log n)$-dimensional spaces in particular.

\begin{restatable} [EPAS for Doubling Metric of Sub-Logarithmic Dimension]{theorem}{epasdoubling}\label{thm:pas_kd}

There is an algorithm that computes $(1+\epsilon)$-approximate solution, for every $\epsilon>0$,  for \rkc in the metric of doubling dimension $d$ in time $f(k,d,\epsilon,z) \poly(m,n)$, where $f(k,d,\epsilon,z) = \left(\left(\frac{2^z}{\epsilon}\right)^d k \log k\right)^{\bigO{k}}$.

\end{restatable}

Note that the above theorem yields an EPAS for \rkc when $d=o_k(\log n)$.
Together with Theorem~\ref{thm: LB-intro}, this theorem gives (almost) a dichotomy result for the existence of EPAS: An EPAS exists for \rkc in $o_k(\log n)$ dimension, while obtaining an EPAS is W[1]-hard in $\Omega_k(\log n)$ dimension. This leads to an almost complete understanding on the existence of EPAS in continuous and discrete Euclidean spaces.

\section{Overview of Techniques} 
\label{sec:overview}

\paragraph*{Improved \fpt Approximation in High-Dimensional Discrete Euclidean Space.}
Our algorithm underlying Theorem~\ref{thm:disfkz} is a slight modification of the factor-$(3^z+\epsilon)$ \fpt approximation algorithm for general metrics by Goyal and Jaiswal~\cite{goyal2021tight}. Our main technical contribution lies in the improved analysis. A key component of the analysis by Goyal and Jaiswal is a simple projection property of metric spaces (see Lemma~\ref{lem:projection} below). We argue that under minor additional assumptions, this property can be strengthened in Euclidean space. The resulting \emph{assigment lemma} (see Lemma~\ref{lem:assignment}) is at the heart of our analysis and its proof relies on several new ideas and technically involved ingredients.

We briefly review the algorithm by Goyal and Jaiswal~\cite{goyal2021tight}. Their algorithm consists of two main steps. First, they compute a $(\kappa,\lambda)$-bicriteria solution $B\subseteq F$, that is, the cost of~$B$ is bounded by $\kappa\opt$ and the cardinality of $B$ is bounded by~$\lambda k$. Specifically, they obtain guarantees $\kappa=1+\epsilon$ and $\lambda=\bigO{\log^2n/\epsilon^2}$ for sufficiently small $\epsilon>0$. In the second step, they extract a feasible solution from the (infeasible) bi-criteria solution $B$ by enumerating all $k$-subsets of~$B$ and outputting the one of minimum cost.

Their analysis is based on proving the existence of a $k$-subset of $B$ whose cost is at most ${(3^{z-1}(\kappa+2))\opt}$, which can be bounded by $(3^z+\epsilon)\opt$ assuming $z$ being constant. Since the algorithm enumerates all $k$-subsets, this provides an upper bound on the cost of the algorithm. The key component of their existential argument is the following simple property of metric spaces, which we call \emph{projection lemma}. It is convenient to think of $O$ as an optimal solution and $B$ as a bicriteria solution with $|B|>|O|$ but the lemma holds for any sets $B,O$.
\begin{restatable}[Projection Lemma]{lemma}{projection}
\label{lem:projection}
Let $(Y,\delta)$ be a metric space, and $B \subseteq Y$. Then for any set $O \subseteq Y$, there exists an assignment $\sigma\colon O \rightarrow B$ such that, for all $o \in O$ and $y \in Y$, we have 
\begin{equation}\label{eq:projection-lemma}
    \delta(y, \sigma(o)) \leq 2\delta(y, o)+ \delta(y,B)\,.
\end{equation} 
\end{restatable}

Intuitively, their lemma allows them to ``project'' the optimal solution $O$ onto a $k$-subset $\sigma(O) \subseteq B$ of the bicriteria solution so that for any client $y\in Y$, the distance $\delta(y,\sigma(O))$ can be charged to $\delta(y,O)$ and $\delta(y,B)$. If fact, the  number $3$ in the approximation factor $3^z+\epsilon$ corresponds to the sum $(2+1)$ of the coefficients in front of $\delta(y,o)$ and $\delta(y,B)$.

In this paper, we study the setting where $Y$ is a discrete Euclidean metric $(P,F,\delta)$, that is, where $P,F$ are finite subsets of $\bRd$ and $\delta$ is the Euclidean distance. A natural attempt to improve the approximation factor in the Euclidean setting is to reduce the coefficients in front of the terms $\delta(y,o)$ and $\delta(y,B)$ in the projection lemma. Unfortunately, this straightforward approach fails: The projection lemma is tight even on the line metric; see Figure \ref{pic:tight}.

\begin{figure}[!ht]
\input{pictures}
    \centering
    \tikzVI
    \vspace{4ex}
    \caption{\footnotesize This example shows that the projection lemma is tight even for the $1$-dimensional Euclidean space.
    Let $o=0$ be the optimum facility located at the origin and serving client $p=1/2$. Let $b'=1$ be the facility in $B$ that serves $p$ and let $b=\sigma(o)=-1$ be the facility in $B$ nearest to $o$. We have $\opt=1/2$, which also equals the cost of $B$. However $\delta(p,\sigma(o))=3/2=2\times \delta(p,o) + 1\times \delta(p,b')$. Combining multiple such examples in orthogonal directions and sharing facility $b$ shows that the approximation ratio of the algorithm of Goyal and Jaiswal~\cite{goyal2021tight} approaches $3$ in the discrete Euclidean space.}
  \label{pic:tight} 
\end{figure}

It turns out that slightly enlarging the projection space is already sufficient to bypass this obstacle. More specifically, we project onto the \emph{midpoint closure}
\begin{equation}\label{eq:midpt-closure}
  {\sf cl}(B) = B \cup \left\{\pi_F\left(\frac{b+b'}{2}\right)\colon b, b' \in B\right\}\, ,
\end{equation}
 of the bicriteria solution where $\pi_F(p)$ represents the closest facility in $F$ to point~$p$. This step exploits that the metric space is embedded into $\bRd$ (so that the midpoints exist).

While on the algorithmic side a slight modification of the original algorithm is sufficient for the improvement, the analysis requires several new ideas and technically involved ingredients. To prove a strengthened version of the projection lemma (called  assignment lemma) we set up a factor-revealing geometric optimization problem in the plane; see~(\ref{eq:displacement-ratio}) in Definition~\ref{def:gamma_beta} below. We call the optimum objective $\gamma_\beta$ of this problem \emph{displacement ratio}. Roughly speaking, this ratio corresponds to the maximum ratio between the left-hand and the right-hand side of~(\ref{eq:projection-lemma}) in Lemma~\ref{lem:projection}. However, we project to $\textsf{cl}(B)$ rather than $B$ and impose some additional minor restrictions. By a careful and technically involved analysis of this optimization problem we can upper bound the displacement ratio in the Euclidean setting by $1-\epsilon_0$ for some universal constant $\epsilon_0>0$ as long as two obstructions are avoided. The first obstruction occurs in any configuration similar to the one in Figure~\ref{pic:tight} above where the bi-criteria solution contains two facilities $b,b'$ so that $o$ is near to the mid-point of $b$ and $b'$. However, in such a configuration facility $o$ certifies that $b''=\pi_F((b+b')/2)$ must be close to $o$ allowing us to assign~$o$ to~$b''$ contained in the mid-point closure. The second obstruction arises if $p$ is \emph{$\beta$-near}, that is, within a small distance $\beta$ from $o$ (but there is no facility in $B$ such as $b'$ as in the first obstruction). For $\beta$ approaching $0$, the displacement ratio of $\beta$-near points can approach $1$ even if when projecting to the mid-point closure of $B$. To account for $\beta$-near points, we therefore cannot resort to the assignment lemma. However, the overall contribution of $\beta$-near  points to the cost of the projected solution can be shown to be very small. More details of the algorithm and its analysis are provided in Section~\ref{sec:FPT-kzc}. 

\paragraph*{Hardness of Discrete \kc}

Our proof constructs an instance of the discrete \kc  from an instance of \mcis problem, which is known to be $W[1]$-hard. In \mcis, we are given a $k$-partite graph $G$ with a $k$-partition of the vertices $V_1,\dots,V_k$, and  the goal is to determine if there is an independent set that contains precisely one node from each set $V_i$, $i\in[k]$.  
The gadget in our construction is a set of nearly equidistant binary code words. Such code words with relative Hamming distance roughly $\nicefrac{1}{2}$ and logarithmic length are known to exist (see Ta-Shma~\cite{ta-shma17:explicit-balanced-codes}). The high level idea is as follows. We associate each vertex of $G$ with a unique code word of suitable length $t$.
Then, we generate a data point in $P$ for each vertex and edge of $G$ by using code word(s) associated with the corresponding vertices. The construction guarantees the following crucial properties: (i) The Hamming distance between the data points of vertices is roughly $t$. (ii) The Hamming distance between 
a data point of vertex $v \in V_i$ and a data point of an edge $e$ is  roughly $t$ if $e$ is incident on $V_i \setminus \{v\}$ and is roughly $3t/2$ otherwise.
(iii) The Hamming distance between the data points of edges is at least (close to) $3t/2$. Thus, the construction forces us to pick data points of vertices as centers in our solution and guarantees that the optimum cost of the \kcen~instance is roughly $t$ if and only if there is an independent set in $G$. As a result, approximating the cost of the \kc instance better than a (roughly) $(3 / 2)^{1 / q}$ factor would imply $W[1] = \fpt$. That is because the cost of a \kc instance is the maximum $\ell_q$ distance between a data point and its closest selected center, and hence, approximating this cost better than the mentioned factor allows us to distinguish between \textsc{Yes} and \textsc{No} cases of an arbitrary instance of \mcis.

\paragraph*{Approximation Scheme for Metrics of Sub-Logarithmic Doubling Dimension.} 

Our algorithm comprises two main components, both based on standard techniques from the literature: instance compression and decomposition of the doubling metric into smaller balls. However, it becomes evident that a natural construction based on these standard techniques for \rkc  faces serious information-theoretic limitations, as explained below.
One natural idea for compressing a \rkc instance is to reduce the number of groups, as each group can be further compressed using a \kzc coreset (such coresets exist~\cite{cohen-addad-etal21:coreset-framework}). This reduction yields a significantly smaller instance. If we could reduce the number of groups to $m' \ll m$ while approximately preserving the cost for every solution, we could obtain an EPAS as follows.
First, apply a $\kzc$ coreset to every group of the compressed instance to obtain another \rkc instance with $m'$ groups, each containing $g(k,\epsilon)$ points, where $g$ is some function that represents the size of $\kzc$ coreset. It is essential to note that this compression is acceptable for obtaining an EPAS since the coreset of a group approximately preserves the \kzc cost of the group.
Next, enumerate all $k$-partitions of the points within each group to find potential solutions. Finally, return the solution that has the minimum \rkc cost. Unfortunately, because \rkc captures \kc (and consequently faces a coreset lower bound of $2^{\Omega(d)}$ in Euclidean space of dimension $d$~\cite{braverman2019coresets}), the number of new groups must satisfy $m' \ge 2^{\Omega(d)}$. Consequently, the running time of this algorithm is $k^{2^{\Omega(d)}}\poly(n,m)$, which is doubly exponential in $d$. It is worth noting that this algorithm matches the running time of~\cite{abbasi-etal23:epas-norm-clustering} and does not yield an EPAS for sub-logarithmic dimension.

Furthermore, if we explore an alternative approach and utilize the coreset of \kc, it is not immediately clear how to extend the coreset of \kc to reduce the number of groups in an instance of \rkc. This is because, firstly, we would require a mapping between the old groups and the new groups, and secondly, this mapping should ideally approximately preserve the \rkc cost for every solution.

Another potential method for compressing the instance involves reducing the number of points in set $P$, rather than altering the groups, with the hope of designing an EPAS that can exploit the smaller $P$ (without concern for the number of groups). However, for this approach to succeed, it is essential to establish a bijection between the old and new groups. Yet, it remains uncertain whether such a bijection exists. In typical coreset constructions, each point in the coreset $P'$ of $P$ has a weight that is the sum of the weights of the points in its local neighborhood in $P$ which it is supposed to represent in $P'$. However, these points in $P$ could potentially belong to different groups, making it challenging to establish the mapping between groups.

The core idea of our approach is to work with an alternative and more general definition of groups that permits a point to participate in different groups with varying weights. In this revised definition, instead of viewing groups as subsets of points, we treat each group as a weight function that assigns non-negative real values to points. This flexibility allows different weights to be assigned to the same point by different groups, which can, in fact, be of practical interest.
Utilizing this new definition, we can devise an approach for compressing the points such that each point in the compressed instance can have a weight for group $g$ that represents the sum of the weights of nearby points in $g$ that were filtered out during compression. Essentially, this enables us to approximately preserve the group costs. With this approach and additional technical work that leverages the standard ball decomposition technique for doubling metrics, we derive a coreset for \rkc that can be employed to construct an EPAS for doubling metrics with sub-logarithmic dimension.

\section{High-Dimensional Discrete Euclidean Space}
\label{sec:discrete}

\subsection{\fpt Approximation Algorithm for \rkc } \label{sec:FPT-kzc}

In this section, we exploit non-trivial properties of the Euclidean metric to prove the following result that breaches the barrier of $3^z$-approximation for \rkc in general metrics.

\constapproxsfcz*

Recall from Section \ref{sec:overview} that our approach begins with computing a $(\kappa,\lambda)$-bicriteria solution~$B$  to the \rkc instance employing the algorithm proposed by Goyal-Jaiswal \cite{goyal2021tight}. As we argued, it is sufficient to prove the existence of a $k$-subset of~$B$ whose cost is within a constant factor of optimal. The result by Goyal and Jaiswal~\cite{goyal2021tight} is based on the following simple projection lemma for general metrics whose proof we state here for the sake of later reference.

\projection*
\begin{proof}
    For each $o \in O$, define $\sigma(o)$ as $\pi_B(o)$, the point in $B$ closest in distance to $o$. Notice that for any $o\in O$, $y\in Y$, we have $\delta(y, \sigma(o)) \leq \delta(y,o) + \delta(o, \sigma(o))$ by triangle inequality. The lemma follows by combining this with $\delta(o, \sigma(o)) = \delta(o,B) \leq \delta(o, \pi_B(y)) \leq \delta(y, o)+\delta(y,B)$.
\end{proof}

This lemma itself is tight even in $1$-dimensional Euclidean space (as we showed in Figure~\ref{pic:tight}). In order to get around this issue, we make use of the property of our geometric space. Given the instance ($P, F, \delta)$ embedded into the Euclidean space and the bicriteria solution~$B$, we project to the mid-point closure ${\sf cl}(B)$ as defined in~(\ref{eq:midpt-closure}).

Notice that $|{\sf cl}(B)| = \bigO{|B|^2}$. Let $O$ be the optimal solution. For $\beta >0$ we say that client $p \in P$ is \emph{$\beta$-far} (from $O$ w.r.t. $B$) if $\delta(p,O) \geq \beta \cdot \delta(p,B)$, and we say that client $p$ is \emph{$\beta$-near} otherwise.
 The key of our analysis is the following strengthening of the projection lemma for Euclidean space, which we call assignment lemma.
 
\begin{restatable}[Assignment Lemma]{lemma}{assignmentlem}
\label{lem:assignment}
 Let $\beta_0 = 0.05$ and let $B \subseteq {\mathbb R}^d$. Then, for any $O \subseteq {\mathbb R}^d$, there exists an assignment $\sigma\colon O \rightarrow {\sf cl}(B)$ such that, for all $\beta_0$-far points $p \in {\mathbb R}^d$, we have $\delta(p, \sigma(O)) \leq (1-\epsilon_0)(2 \delta(p, O) + \delta(p, B))$ where $\epsilon_0 > 0.002$.
\end{restatable}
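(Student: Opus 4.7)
Plan. For each $o \in O$, set $b_o := \pi_B(o)$ and $s_o := \delta(o, B)$. The idea is to refine the canonical assignment $o \mapsto b_o$ of Lemma~\ref{lem:projection} by using a midpoint in ${\sf cl}(B)$ whenever it offers an improvement. Fix a tuning parameter $\mu > 0$. For each $o$, check whether there exists $b'_o \in B$ with $\delta(o, b'_o) \leq (1+\mu) s_o$ that is ``nearly antipodal'' to $b_o$ through $o$, in the sense that the midpoint $m := (b_o + b'_o)/2$ satisfies $\delta(o, m) \leq (\tfrac{1}{2} - \Omega(\mu)) s_o$. If such a $b'_o$ exists, set $\sigma(o) := \pi_F(m) \in {\sf cl}(B)$; otherwise set $\sigma(o) := b_o$.

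Midpoint case. When $\sigma(o) = \pi_F(m)$, the Euclidean parallelogram identity
\[
\delta(o,m)^2 \;=\; \tfrac{1}{2}\bigl(s_o^2 + \delta(o,b'_o)^2\bigr) \;-\; \tfrac{1}{4}\delta(b_o, b'_o)^2
\]
together with the nearly-antipodal hypothesis keeps $\delta(o,m)$ strictly below $s_o/2$. Crucially, since $o \in F$ we have $\delta(m, \pi_F(m)) \leq \delta(m, o)$, so $\delta(o, \sigma(o)) \leq 2\delta(o, m) \leq (1 - \Omega(\mu)) s_o$. For any $\beta_0$-far point $p$ whose nearest optimum $o^*$ falls in this case, writing $r_o := \delta(p, o^*)$ and $r_b := \delta(p, B)$, the triangle inequality yields
\[
\delta(p, \sigma(o^*)) \;\le\; r_o + (1-\Omega(\mu))\,s_o \;\le\; r_o + (1-\Omega(\mu))(r_o + r_b),
\]
which improves the standard $2r_o + r_b$ bound by the required factor.

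No-midpoint case. Here no $b' \in B$ is nearly antipodal to $b_o$ relative to $o^*$. The key step is to show that then the two inequalities in the classical derivation $\delta(p, b_o) \leq r_o + s_o$ and $s_o \leq \delta(o^*, b_p) \leq r_o + r_b$ (with $b_p := \pi_B(p)$) cannot simultaneously be nearly tight. Indeed, tightness of the first forces $b_o$ to lie on the ray from $p$ away from $o^*$; tightness of the second forces $b_p$ to lie on the ray from $o^*$ through $p$ with $\delta(o^*, b_p) \approx s_o$; together these would imply $b_p$ is nearly antipodal to $b_o$ through $o^*$, contradicting the no-midpoint hypothesis. A law-of-cosines computation converts this qualitative contradiction into an $\Omega(\mu)$ quantitative slack in one of the two inequalities. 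The $\beta_0$-far hypothesis $r_o \geq \beta_0 r_b$ is then used to ensure the slack constitutes a constant fraction of $2r_o + r_b$, yielding the desired $(1-\epsilon_0)$ factor.

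Main obstacle. The principal difficulty is that the discrete projection $\pi_F$ can inflate the error by up to a factor of two; only the estimate $\delta(m, \pi_F(m)) \leq \delta(m, o)$ (exploiting $o \in F$) is available. This forces the near-antipodality condition in the midpoint case to be quite restrictive and, in turn, compels $\mu$ to be small. Balancing $\mu$ against $\beta_0 = 0.05$ is essentially a one-variable optimization whose bookkeeping (through the parallelogram and law-of-cosines estimates) is delicate, but yields the announced constants $\epsilon_0 > 0.002$ and $\beta_0 = 0.05$.
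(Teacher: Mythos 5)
Your plan tracks the paper's proof almost exactly on the structural level: same definition of $\sigma$ via a case split on whether a near-antipodal $b'$ exists (the paper phrases this as $\ball(q,\alpha)\cap B\neq\varnothing$ for the mirror image $q$ of $b_o$ through $o$); same use of the midpoint $\pi_F\bigl((b_o+b')/2\bigr)\in{\sf cl}(B)$ when such a $b'$ exists; same key observation that $o\in F$ forces $\|\pi_F(m)-m\|\leq\|m-o\|$, so $\delta(o,\sigma(o))\le 2\delta(o,m)$; and the same structure for the fallback case of showing that the chain $\delta(p,b_o)\le r_o+s_o\le 2r_o+r_b$ cannot be globally tight.

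Where you diverge is in how the no-midpoint slack is extracted. The paper partitions $\bR^d$ into explicit angular--radial regions $R_1,\dots,R_4$ around $o$ (angle relative to $q$, and radius relative to $s_o$) and bounds the displacement ratio region by region. You instead argue abstractly that the two triangle inequalities cannot simultaneously be near-tight: tightness of $\delta(p,b_o)\le r_o+s_o$ forces $b_o$ onto the ray from $p$ beyond $o^*$, tightness of $s_o\le r_o+r_b$ forces $b_p$ onto the opposite ray at distance $\approx s_o$, and the two together exhibit a near-antipodal $b_p$, contradicting the case hypothesis. Unpacked, these are the same geometry ($R_1$ is exactly ``slack in the first inequality because of angular separation,'' $R_2$ is ``slack in the second because $b_p$ would have to be near $q$,'' $R_3,R_4$ are ``slack because $r_o\gtrsim s_o$''), so this is a repackaging rather than a genuinely different route; your framing is arguably cleaner to state, the paper's is closer to an explicit computation.

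Two things are genuinely missing from your sketch and are the actual content of the lemma. First, the ``law-of-cosines computation'' carries all the quantitative weight and is not performed; without it the claimed constants $\beta_0=0.05$ and $\epsilon_0>0.002$ are unsupported. Second, the precise role of the $\beta_0$-far hypothesis is not spelled out: it is used to rule out $r_o$ being much smaller than $s_o$ (since $s_o\le r_o+r_b$ and $r_o\ge\beta_0 r_b$ together give $r_o\ge\tfrac{\beta_0}{1+\beta_0}s_o$), and this lower bound on $r_o$ is exactly what makes the slack in the first triangle inequality proportional to $(1+\cos\theta)$ a constant fraction of $2r_o+r_b$ when $p$ is angularly separated from $q$. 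Your sketch gestures at this (``the $\beta_0$-far hypothesis is then used...'') but does not show where it enters, which is the crux of the hardest case ($R_2$ in the paper's decomposition).
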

    
To prove Lemma \ref{lem:assignment}, we start with defining the assignment function $\sigma$. 
Take any facility $o \in O$ and let $b = \pi_B(o)$. We assume w.l.o.g.\ that the instance is rotated so that $p,b$ and $o$ lie in the plane spanned by the first two coordinates. For the sake of easier notation, we identify $p,b,o$ by points in $\bR^2$. Further, by translation and scaling, we assume that $o$ coincides with the origin and that $b=(-1,0)$. Let $q=(0,1)$ be the mirror image of~$b$. Let $\alpha$ be a parameter to be fixed (we later set it to $0.6$). We define $\sigma(o)$ based on the position of~$o$ relative to an \(\alpha\)-ball. Specifically, $\sigma(o) = b$ if the $\alpha$-ball centered at a point $q$ contains no facility from $B$; otherwise, $\sigma(o)$ is the projection $\pi_{\text{cl}(B)}(o)$ of $o$ onto the mid-point closure of~$B$.

Our goal is to analyze the displacement of a client $p$  under the assignment rule $\sigma$. Recall from the proof of Lemma \ref{lem:projection} that if $\sigma(o)$ is simply the projection onto $B$, then a client $p$, when served by facilities $o$ and $b'$ in sets $O$ and $B$ respectively, incurs a cost of at most $2||p - o|| + ||p - b'||$. We wish to show that the assignment cost in our algorithm is strictly smaller than this upper bound (under certain assumptions). We prove this by bounding the ratio of these two quantities.

\begin{restatable}[Displacement Ratio]{definition}{gammabeta}
\label{def:gamma_beta}
For a given small constant $\beta > 0$, let the \emph{displacement ratio} be defined as
 \begin{equation}\label{eq:displacement-ratio}
 \gamma_\beta = \max_{\substack{p\in \bR^d \backslash \ball(o, \beta),\\ b'\in\bR^d\setminus\ball(o,1) }} \left \{ \frac{||p-\sigma(o)||}{2||p - o|| + ||p - b'||} \right\}\,.
 \end{equation}
\end{restatable}
Let $S$ be the plane spanned by $b$, $p$, and $o$. After the appropriate rotations and translations we mentioned earlier, $S$ would coincide with the $x$-$y$ plane. In what follows, we also restrict $b'$ to lie in  plane $S$ as well. This follows due to Claim~\ref{lem:restrict-to-plane} mentioned below. For cleaner analysis, we defer the proof of this claim to Section~\ref{ss: lem plane}.
\begin{claim}\label{lem:restrict-to-plane}
The maximum displacement ratio $\gamma_\beta$ is achieved by some $b'$ that lies in plane $S$ containing points $b$, $o$, and $p$.
\end{claim}

To show the lemma, we demonstrate that $\gamma_{\beta}$ can be upper-bounded by $1-f(\alpha , \beta)$ for some $f(\alpha , \beta)>0$, where $f(\cdot)$ is a function dependent on $\alpha$, $\beta$ and the geometry of $O$ and~$B$.

\begin{claim}
\label{lem:proj_new}
Following  the assignment rule $\sigma$, we have that $\gamma_{\beta} \leq 1 - f(\alpha, \beta)$.
\end{claim}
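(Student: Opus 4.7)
The proof is by case analysis matching the two branches of the assignment rule $\sigma$, followed by a low-dimensional geometric optimization in each case.

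\textbf{Case 1} ($\ball(q, \alpha) \cap B = \emptyset$, so $\sigma(o) = b$). Here every facility of $B$---and in particular the $b'$ serving $p$---satisfies $\|b' - q\| \geq \alpha$. Observe that the chain
\[
\|p - b\| \;\leq\; \|p-o\| + \|o-b\| \;\leq\; \|p-o\| + \|o-b'\| \;\leq\; 2\|p-o\| + \|p-b'\|
\]
(the middle step uses $b = \pi_B(o)$) is simultaneously tight precisely when $p, o, b, b'$ are collinear with $o$ between $b$ and $b'$ and $\|o-b\| = \|o-b'\|$---that is, exactly when $b' = q$, which is ruled out by $\|b' - q\| \geq \alpha$. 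To quantify the slack I will place coordinates with $o$ at the origin and $b$ on the negative first axis, note by rotational symmetry that the worst case lies in the two-dimensional plane spanned by $\{o, b, b'\}$, and then maximize $\|p - b\|/(2\|p-o\| + \|p - b'\|)$ over this plane subject to $\|b' - q\| \geq \alpha$, $\|p - o\| \geq \beta$, and $\|o-b\| \leq \|o-b'\|$. A boundary/Lagrangian analysis yields $\gamma_\beta \leq 1 - f_1(\alpha, \beta)$ for some explicit $f_1 > 0$.

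\textbf{Case 2} ($\ball(q, \alpha) \cap B \neq \emptyset$, so $\sigma(o) = \pi_{{\sf cl}(B)}(o)$). Choose $b'' \in B$ with $\|b'' - q\| \leq \alpha$ and set $m := (b + b'')/2$. Since $o$ is by construction the midpoint of $b$ and $q$, we have $\|m - o\| \leq \|b''-q\|/2 \leq \alpha/2$, and the candidate facility $\pi_F(m) \in {\sf cl}(B)$ yields
\[
\|\sigma(o) - o\| \;\leq\; \|\pi_F(m) - o\| \;\leq\; \|\pi_F(m) - m\| + \|m - o\|.
\]
I will then split on whether $\pi_F(m)$ is itself close to $m$. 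If yes, then $\sigma(o)$ sits within $O(\alpha)$ of $o$ and the triangle inequality $\|p - \sigma(o)\| \leq \|p-o\| + O(\alpha)$, combined with $\|p-o\| \geq \beta$, gives a displacement ratio bounded away from $1$. If no, then the closest element of ${\sf cl}(B)$ to $o$ is effectively $b$ and $\sigma(o) = b$; we recycle Case 1's optimization, now sub-split on whether the facility $b'$ serving $p$ lies within $\alpha$ of $q$---the negative sub-case reduces directly to Case 1, and the positive sub-case lets us pair $b$ and $b'$ themselves into a midpoint near $o$ to improve the bound.

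\textbf{Combining and main obstacle.} Setting $f(\alpha, \beta)$ to be the minimum of the explicit gaps extracted from the two cases yields the claim. The principal difficulty is the planar optimization of Case 1: we must maximize a ratio of Euclidean norms over a semi-algebraic region carved out by the triangle inequalities, $\|b' - q\| \geq \alpha$, and $\|p - o\| \geq \beta$. Reducing to two dimensions by rotational symmetry is routine, but extracting concrete numerical constants---those that downstream yield $\beta_0 = 0.05$ and $\epsilon_0 > 0.002$ in Lemma~\ref{lem:assignment}---will require a careful sub-case split on the position of $p$ along and transverse to the line through $o$ and $b$, and is by far the most calculation-heavy step.
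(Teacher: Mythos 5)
Your case decomposition mirrors the paper's (though with the labels swapped: your Case 1 is the paper's Case 2 and vice versa), and your high-level intuition in each branch is on target. However, there are two substantive gaps.

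First, in your Case 2 (the branch where $\ball(q,\alpha)\cap B\neq\emptyset$ and $\sigma(o)=\pi_{\sf cl(B)}(o)$), the sub-split on ``whether $\pi_F(m)$ is close to $m$'' is both unnecessary and problematic. The key observation you miss is that $o$ itself lies in $F$ (this is the discrete Euclidean setting, and $o$ is a chosen center), so $o\in F$ directly certifies $\|\pi_F(m)-m\|\leq\|o-m\|\leq\alpha/2$, and hence $\|\sigma(o)-o\|\leq\|\pi_F(m)-o\|\leq\alpha$ unconditionally. That single triangle inequality closes this branch with $\gamma\leq 1-(1-\alpha)/2$. Your ``if no'' sub-case can never arise, and the way you propose to handle it---asserting $\sigma(o)=b$ and then ``recycling Case 1''---does not follow from your hypotheses (the closest element of ${\sf cl}(B)$ to $o$ need not be $b$) and tangles the two branches circularly.

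Second, and more importantly, your Case 1 (the paper's hard branch, $\sigma(o)=b$) is where essentially all of the technical content of the claim lives, and you reduce it to a one-line appeal to ``a boundary/Lagrangian analysis.'' Your observation that simultaneous tightness of the chain forces $b'=q$, which the constraint $\|b'-q\|\geq\alpha$ excludes, is a correct and useful intuition for \emph{why} a gap should exist, but it does not by itself produce a quantitative bound, and the optimization is not a textbook Lagrangian problem: the feasible region is a semi-algebraic set carved out by several constraints, and the maximum could in principle sit at a corner or at infinity, so one must check all boundary strata and the asymptotic regime. The paper handles this by splitting the position of $p$ into four explicit regions $R_1,\dots,R_4$ (depending on an auxiliary radius $\omega$ and the angle to the intersection points $q_1,q_2$ of $\ball(o,1)$ with $\ball(q,\alpha)$), and in each region identifies the worst-case position of $b'$ (typically on $\partial\ball(q,\alpha)$ at $q_1$) and derives a closed-form upper bound on $\gamma$ of the form $1-g_i(\alpha,\beta)$. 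Without this, or an equivalent explicit optimization, you cannot extract the concrete constants $\beta_0=0.05$ and $\epsilon_0>0.002$ that Lemma~\ref{lem:assignment} and Theorem~\ref{thm:disfkz} require, so the proof as written does not establish the claim.
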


For the sake of notation, we drop the subscript of $\gamma_{\beta}$ everywhere in the proof. 

\begin{proof}
Given $O$ and $B$, we consider a facility $o \in O$. Let $b\in B$  the closest facility to $o$, and $b'\in B$  the bicriteria solution that serves client $p$.  Now consider the $\alpha$-ball around $q$ and $\beta$-ball around $o$, for the ease of analyse we consider half plane above $x$-coordinate the same arguments hold for half plane below $x$-coordinate. We distinguish two cases.

\paragraph*{Case 1: $\ball(q, \alpha) \cap B \neq \varnothing$.} Suppose that $B$ contains a facility $b''$ lying inside the $\alpha$-ball around $q$.  Given that  $b\in B$ is the closest facility to $o$, it follows that $b''\in(\ball(q,\alpha)\backslash \ball(o, 1))$. In this case $\sigma(o)=\pi_{\text{cl}(B)}(o)$. Hence $\sigma(o)$ is no farther from $o$ than the facility $\pi_F((b+b'')/2)$ nearest to the midpoint of $b$ and $b''$.  Notice that the optimal center $o$ certifies the existence of a point in $F$ nearby the mid-point of $b$ and $b''$. The point $o\in F$ shows that $\pi_{F}((b+b'')/2)$ has distance at most $\alpha$ to $o$ because $||\pi_{F}((b+b'')/2)-(b+b'')/2||\leq ||(b+b'')/2-o||$. Hence we obtain $||\sigma(o)-o|| \leq \alpha$, therefore $ ||p-\sigma(o)|| \leq ||p-o||+ \alpha$ (see Figure \ref{fig:case1M}). Recall that the aim is to upper bound the displacement ratio \ref{def:gamma_beta} for client $p$, notice that $||p-o||+||p-b'|| \geq 1$, we obtain:
\begin{equation*}
     \begin{split}
     \gamma=\frac{||\sigma(o)-p||}{2||p-o|| + ||p-b'||} & \leq \frac{||p-o||+\alpha}{||p-o||+1}\\
     &\leq 1 - \frac{1-\alpha}{||p-o||+1}\\
     &\leq 1- \frac{1-\alpha}{2}
     \end{split}
\end{equation*}     

\begin{figure}[h]
\input{pictures}
     \centering
     \tikzV
      \caption{\footnotesize The midpoint of $b$ and $b''$ is shown by red dot, $||(b+b'')/2 - o || \leq \frac{\alpha}{2}$ and thus $||\sigma(o)-o||\leq\alpha$.}
        \label{fig:case1M}
\end{figure}    
\paragraph*{Case 2: $\ball(q, \alpha) \cap B = \varnothing$.} in the second case, where the $\alpha$-ball does not contain a facility from $B$, we argue that the points $o$, $\sigma(o) = b$, and $b'$ are far enough from a co-linear position. This allows us to argue that the triangle inequality in the proof of Lemma \ref{lem:projection} is not tight. Towards this, we divide the space into four regions that could contain client $p$, as indicated in Figure \ref{fig:regions}.

\begin{figure}[h]
\input{pictures}
     \centering
     \tikzIX
      \caption{ \footnotesize The dashed black circle depicts $\ball(o,1)$, while the dashed gray circles represent $\ball(o, 1-\omega)$ and $\ball(o,1+\omega)$. Regions $R_1$, $R_2$, $R_3$, and $R_4$ are outlined with green, yellow, purple, and blue borders respectively.}
        \label{fig:regions}  
\end{figure}
We define $q_1$ and $q_2$ as two points of intersection between $\ball(o,1)$ and $\ball (q, \alpha$). See Figure \ref{fig} for an illustration. We assume that $p$ lies the half plane above the $x$-axis. (The case where $p$ lies below the $x$-axis is symmetric.) Now, consider $q_3$ as the midpoint of $q$ and~$q_1$. Furthermore, define the region $H$ as the area above the lines passing through$(q_3 , o)$ and $(o,b)$. We define region $R_1 = H \backslash \ball(o , \beta)$.  Next, consider $(1-\omega)$ and $(1+ \omega)$ balls around $o$, $H'$ is defined as the area below the line passing through $(o, q_3)$ and above the line passing through $(o, q)$, we define $R_2 = (\ball(o, 1-\omega)\backslash \ball (o,\beta))\cap H'$, $R_3= (\ball(o, 1+\omega)\backslash \ball (o, 1-\omega))\cap H'$, and $ R_4=  H'\backslash \ball ( o , 1+ \omega)$, the regions are indicated in Figure \ref{fig:regions}.
\begin{itemize}
   \item $p \in R_1$, Let $b''$ be the closest point to $p$ not in the interior of $\ball(o,1)$, and let $p'$ be the point on the boundary of $\ball(o , \beta)$ that is closet to $p$. Let~$p''$ be the point where the segment $(o,q_3)$ intersects the boundary of $\ball(o,\beta)$, that is, $p''= (\beta \cos{\theta} , \beta \sin {\theta})$ where $\theta=\angle q_3 o q_1$. Notice that $\cos{\theta} = 1 - \frac{\alpha^2}{4}$, see Figure \ref{fig:a} for an illustration. First, we assume $p$ is inside $\ball(o,1+2\beta)$ in the region of $R_1$.

\begin{figure}[ht]
\input{pictures}
  \begin{subfigure}[b]{0.5\linewidth}
    \centering
    \tikzI
    \caption{\footnotesize  $p \in R_1$ } 
    \label{fig:a} 
    \vspace{4ex}
 \end{subfigure}
 \begin{subfigure}[b]{0.5\linewidth}
    \centering
    \tikzII    
    \caption{$p\in R_2$} 
    \label{fig:b} 
    \vspace{4ex}
 \end{subfigure} 
 \begin{subfigure}[b]{0.5\linewidth}
    \centering
    \tikzIII
    \caption{$p \in R_3$} 
    \label{fig:c} 
    \vspace{4ex}
 \end{subfigure}
 \begin{subfigure}[b]{0.5\linewidth}
 \centering
  \tikzIV
  \caption{$p\in R_4$}
  \label{fig:d} 
  \end{subfigure} 
  \caption{\footnotesize $o\in O$ is an optimum solution, $b\in B$ is the closest bicriteria solution to $o$, $1$-ball around $o$ is shown as a dashed circle, $\alpha$-ball around $q$ and $\beta$-ball around $o$ are shown in blue, $(1-\omega)$ and $(1+\omega)$ around $o$ are shown as blue dashed circles. The regions are specified by green borders. }
  \label{fig} 
\end{figure}

    \begin{observation} \label{obs:case2}
    For any $\epsilon_1 ,\epsilon_2 , X, Y \geq 0$ :  \[\frac{X-\epsilon_1+Y}{X+Y}\leq \frac{X-\epsilon_1 + Y+\epsilon_2}{X+Y+\epsilon_2}\]
    \end{observation}
Consider assigning $p$ via $p'$ to $b$. We bound the displacement cost as follows:
    \begin{equation*}
     \begin{split}
        \gamma_{\beta}= \frac{||\sigma(o)-p||}{2||p-o|| + ||p-b'||} & \leq \frac{||b-p'||+||p-p'||}{2||p'-o|| + ||p'-b''||+||p-p'||}\\
        &\leq \frac{||b-p''||+||p-p'||}{2\beta +1 -\beta+||p-p'||}\\
        & \leq \frac{\sqrt{(\beta \cos(\theta) + 1)^2+(\beta \sin(\theta))^2}+||p-p'||}{1 + \beta+||p-p'||}\\
        &=  \frac{\sqrt{\beta^2 + 2\beta \cos{\theta} +1}+||p-p'||}{1 + \beta+||p-p'||}\\
        &=  \frac{\sqrt{(1+\beta)^2 - \frac{\beta \alpha^2}{2}}+||p-p'||}{1 + \beta+||p-p'||}\\
     \end{split}
    \end{equation*}
    We assume $||p-p'||\leq 1+\beta$, and by observation \ref{obs:case2}, we obtain:
    \begin{equation*}
     \begin{split}
       \gamma_{\beta} &\leq \frac{ (1+\beta) (1+\sqrt{(1-\frac{\beta \alpha^2}{2}})}{2(1+\beta)} \leq \frac{1}{2} + \frac{\sqrt{1^2 - \frac{2\beta \alpha^2}{4} + \frac{\beta^2 \alpha^4}{16}}}{2}\leq\frac{1}{2}+ \frac{1- \frac{\beta \alpha^2}{4}}{2}\\
        &= 1 - \frac{4+\beta \alpha^2}{8}
     \end{split}
    \end{equation*}
Second, let's assume that the client $p$ is distant from $o$ and positioned within region $R_1$ outside  $\ball(o,1+2\beta)$, we can bound $\gamma_{\beta}$ as follows:
    \begin{equation*}
     \begin{split}
       \gamma_{\beta} &\leq  \frac{1 + ||o-p||}{2||o-p||}\leq \frac{1+1+2\beta}{2(1+2\beta)} = \frac{1+\beta}{1+2\beta} = 1 -\frac{\beta}{1+2\beta} 
     \end{split}
    \end{equation*}

 \item $p \in R_2$, we obtain the best location for $b'$ is when it lies on the point $q_1$ (see Figure \ref{fig:b} for an illustration), the cost of displacement is as follows:
    \begin{equation*}
     \begin{split}
        \gamma= \frac{||\sigma(o)-p||}{2||p-o|| + ||p-b'||} &\leq \frac{||p-o||+1}{||p-o|| + ||p'-o||+||p'-p||+||p-b'||}\\
        &\leq \frac{||p-o||+1}{||p-o||+ ||p''-o||+||p''-b'||}.
    \end{split}
    \end{equation*}
  To calculate $||p''-b'||$ we can consider  rotated $p''$ and $b'$ so that $p''=(\beta , 0)$ and $b'= (\cos\theta, \sin\theta)$, then:
    \begin{equation*}
       \gamma \leq \frac{||p-o|| +1}{||p-o|| + \beta + \sqrt{( \cos(\theta) - \beta)^2+(\sin(\theta))^2}}
    \end{equation*}
 Assume $\beta \leq \frac{\alpha}{12}$ then we have  $||p-o|| \leq 1-\beta$, hence:
    \begin{equation*}
        \begin{split}
         \gamma&\leq  \frac{2-\beta}{1+  \sqrt{\beta^2-2\beta(1-\frac{\alpha^2}{4})+1}}\\
        &\leq  \frac{2-\beta}{1 + (1-\beta)\sqrt{1+2\beta\frac{\alpha^2}{4}}}\\
        &=1-\frac{\sqrt{1+\beta\frac{\alpha^2}{2}}(1-\beta)-(1-\beta)}{1 + (1-\beta)\sqrt{1+\beta\frac{\alpha^2}{2}}}
        \end{split}
    \end{equation*}

    \item $p\in R_3$, we claim  the distance from $p$ to the closest bicriteria solution $b'$ can be bounded as $||p-b'|| \geq \frac{\alpha}{3}$.
    
    Suppose $\alpha\leq \frac{6}{10}$,  we define $z = 1 - ||q_3 - o|| \leq\frac{\alpha}{12}$, consider $p'''$ the closest point to $p$ on the line $(q_3, q)$,  hence $||p-p'''|| \leq \frac{\alpha}{6}$, now we have $\frac{\alpha}{2} \leq ||p'''-b'|| \leq ||p'''-p||+||p-b'||$ , then we obtain $||p-b'|| \geq \frac{\alpha}{3}$.
    \begin{equation*}
     \begin{split}
        \gamma= \frac{||\sigma(o)-p||}{2||p-o|| + ||p-b'||} & = \frac{||p-o||+1}{||p-o|| + (1- \frac{\alpha}{12})+ \frac{\alpha}{3}}\\
        &= \frac{||p-o|| +1}{||p-o|| + 1+ \frac{\alpha}{4}}\\
        &\leq \frac{2}{2+\frac{\alpha}{4}}\\
        &=1-\frac{\alpha}{8+\alpha}
     \end{split}
    \end{equation*}

    \item $p\in R_4$, assume $p$ and $b'$ are in the same location, therefore:
    \begin{equation*}
    \gamma= \frac{||\sigma(o)-p||}{2||p-o|| + ||p-b'||} \leq 1-\frac{||p-o||-||b-o||}{2||p-o|| }
    \end{equation*}
    Notice that $||p-o||\geq 1+ \frac{\alpha}{12}$, consider two cases either  $||p-o|| \leq 1+\alpha$ or  $||p-o|| > 1+\alpha$:
    \begin{itemize}
        \item  $||p-o|| \leq 1+\alpha$ 
        \begin{equation*}
            \gamma\leq 1 - \frac{{1+\frac{\alpha}{12}}-1}{2(1+\alpha)}  = 1 - \frac{\alpha}{24(1+\alpha)}
         \end{equation*}
        \item  $||p-o|| > 1+\alpha$ 
        \begin{equation*}
           \begin{split}
            \gamma\leq 1 - \frac{||p-o||-1} {2||p-o||}& \leq \frac{1}{2} +\frac{1}{2(1+\alpha)}\\
            &= 1- \frac{1+\alpha}{2(1+\alpha)}
            \end{split}
        \end{equation*}    
  
\end{itemize}
\end{itemize}
Therefore, by examining the position of $p$ in the regions, we establish that $\gamma_{\beta}$ is upper-bounded by $1 - f(\alpha, \beta)$. Consequently, Lemma \ref{lem:assignment} is substantiated by showing the existence of an $\alpha_0 \leq 0.6$ and a sufficiently small $\beta_0 \leq 0.05$ such that $\gamma_{\beta_0} \leq 1 - f(\alpha_0, \beta_0) = 1 - \epsilon_0 \leq 0.9978$. 
\end{proof}

In the proof of Theorem \ref{thm:disfkz}, we show that this new assignment property is enough to derive an improved \fpt approximation for \rkc in Euclidean space.
Since the assignment $\sigma$ maps every facility in $O$ uniquely to a facility in ${\sf cl}(B)$, this implies that $\sigma(O)$ is a feasible solution of cost at most $(3^z \cdot (1-\eta_0))\opt$. This certifies the existence of a feasible solution being a subset of ${\sf cl}(B)$ with the desired approximation factor. Hence, we can find such a solution in \fpt time by enumeration.

Now we are ready to prove Theorem \ref{thm:disfkz} (restated  for convenience).
\constapproxsfcz*
\begin{proof}
Let $B\subseteq {\mathbb R}^d$ denote a $(1 + \epsilon_0, k \cdot \ln n^2 / \epsilon_0^2)$ bi-criteria solution by applying the algorithm of \cite{goyal2021tight}. We denote the total cost of the set of $\beta_0$-near ($\beta_0$-far) points by $\bicn, \on, \an$ ($\bicf, \of, \af$) in a bi-criteria solution, an optimum solution, and the solution returned by our algorithm, respectively. Note that from the definition of $\beta_0$-near points and the (clustering)~cost, we have that $\on \leq \beta_0 \cdot \bicn \leq \beta_0 \cdot \opt$. Consequently, for the set of $\beta_0$-far points, we get $\of \geq (1-\beta_0)\cdot \opt$. 

For the set of $\beta_0$-near points, the best bound on the cost that our algorithm can achieve is that of Lemma~\ref{lem:projection}. Therefore, we get $\an \leq(2\on + \bicn)$. On the other hand, by Lemma~\ref{lem:assignment}, we do save some cost on the $\beta_0$-far points as we have $\af \leq \big((1-\epsilon_0)(2\of + \bicf)\big)$.
Moreover $\bicn+ \bicf = \bic \leq (1 + \epsilon_0)\opt$ by the choice of the bicriteria solution. Putting these factors together, we obtain

\begin{align*}
       \an+\af & \leq 3^{z-1} \cdot \big[ (2\on + \bicn)+(1 - \epsilon_0)^z(2\of + \bicf) \big]\\
     & \leq 3^{z-1} \cdot \big[ 2\beta_0^z \opt + 2 (1 - \epsilon_0)^z (1 - \beta_0^z) \opt + \bicn + (1 - \epsilon_0)^z \bicf \big]\\
     &\leq 3^{z-1} \cdot \big[ 2\beta_0^z \opt + 2 (1 - \epsilon_0)^z (1 - \beta_0^z) \opt + \bicn + \bicf  \big] \\ 
      & \leq 3^{z-1} \cdot \big[ 2\big( \beta_0^z  +  (1 - \epsilon_0)^z (1 - \beta_0^z) \big) \opt+  (1 + \epsilon_0) \opt \big]\\
       &\leq 3^{z-1} \cdot \big[\big( 2\beta_0^z  +  2(1 - \epsilon_0)^z -2(1-\epsilon_0)^z \beta_0^z +1 + \epsilon_0 \big)\opt \big]\\
         &\leq 3^{z-1} \cdot \big[\big( 2(1 - \epsilon_0)^z +2 \beta_0^z \epsilon_0 z +1 + \epsilon_0 \big)\opt \big]           \ \ \  \ \ \text{(using Bernoulli's  inequality\footnotemark )}\\
          &\leq 3^{z-1} \cdot \big[\underbrace{\big( 2(1 - \epsilon_0)^z +(2 \beta_0^z z +1) \epsilon_0+1 \big)}_{3 - \epsilon_1}\opt \big]\\
 &
\end{align*}
where in the first inequality, we used the approximate triangle inequality 
$(a + b + c)^z \leq 3^{z-1} \cdot (a^z + b^z + c^z)$ and in the second inequality, we use the fact that $\on + \of = \opt$ and thus, the maximizer for $2\on + 2(1 - \epsilon_0)^z \of$ occurs at the maximum of $\on$.
Next, to bound the above under-braced expression, we introduce a new parameter $\epsilon_1$, 
which we prove below in claim \ref{claim:kzs} that $\epsilon_1\geq 0.0018$. Finally, we have
\begin{align*}
    \an+\af &\leq 3^{z-1} \cdot (3-\epsilon_1)\opt  
           \\
          &\leq3^z \cdot (1 - \eta_0) \opt \ \text{where}\ \eta_0= \frac{\epsilon_1}{3}
          \\
\end{align*}

\begin{claim}
\label{claim:kzs}
 For any integer $z$, we have that $2(1-\epsilon_0)^z+\epsilon_0(1+2\beta_0^zz)\leq 2 -\epsilon_1$
\end{claim}
\begin{proof}
 Let $2(1-\epsilon_0)^z+\epsilon_0(1+2\beta_0^zz)\leq1.9982$.
Note that, since from Lemma~\ref{lem:assignment} we have that $\beta_0 = 0.05$ and $\epsilon_0>0.002$,  we get lower bound for $\epsilon_1$ as follows (the left-hand side is maximized at $z = 1$): $\epsilon_1 > 2- 2(1- \epsilon_0)- \epsilon_0(1+2\beta_0) >0.0018$.
\end{proof}
It remains to analyze the running time of the algorithm. In the initial phase of our algorithm, we invoke the Goyal-Jaiswal bi-criteria algorithm~\cite{goyal2021tight}. Subsequently, we evaluate all possible $k$-subsets of {\sf cl}(B), whose number is bounded by $\binom{(\lambda \cdot k )^2}{k} \leq (e \lambda)^{2k}$. This leads to an overall running time $\bigO {(e \lambda )^{2k}\cdot nk}$ where  $\lambda= \bigO{\frac{k}{\epsilon_0^2}\cdot \ln^2 n}$, therefore we have $\bigO{(e\frac{k}{\epsilon_0^2}\cdot \ln^2 n)^{2k} \cdot nk} = (\frac{k}{\epsilon_0})^{\bigO {k}} \cdot  n^{\bigO{1}} $. Let $\eta_1 = 3^z \cdot \eta_0$, Then for $\eta_0 > 0.0006$ and constant $z$, $\epsilon_1$ simplifies to $\frac{\eta_1}{3^{z-1}}$, resulting in an overall complexity of $(k/\epsilon_1)^{\bigO {k}} \cdot n^{\bigO{1}}$, which is \fpt in terms of $k$.

\end{proof}

\subsubsection{Proof of Claim~\ref{lem:restrict-to-plane}}\label{ss: lem plane}
We first define the ratio $\gamma'_\beta$:
\begin{equation*}\label{eq:displacement-ratio1}
 \gamma'_\beta = \max_{\substack{p\in \bR^d \backslash \ball(o, \beta),\\ b'\in S\setminus\ball(o,1) }} \left \{ \frac{||p-\sigma(o)||}{2||p - o|| + ||p - b'||} \right\}\,.
 \end{equation*}

Note that for the sake of the optimization problem, $b$ and $o$ are fixed points and therefore, the plane $S$ is a function of the location of point $p$. Hence, we write the plane $S$ as $S_p$.
We prove that $\gamma_\beta \leq \gamma'_\beta$. Let $b'$ and $p$ be the points optimizing $\gamma_\beta$. Suppose $b'$ is not on $S_p$.  We start by choosing an orthonormal basis $(x,y,z)$ for the linear space spanning $b$, $o$, $p$ and $b'$, and fixing a system of coordinates. Towards this, let $z = b-o$, $x = \frac{p-\hat{p}}{||p-\hat{p}||}$, where $\hat{p}$ is the orthogonal projection of $p$ on the line containing $b$ and $o$. Then fix $y = \frac{b'-\hat{b'}}{||b'-\hat{b'}||}$, where $\hat{b'}$ is the orthogonal projection of $b'$ on S,  the plane containing $b$, $o$, and $p$. Now we fix the origin of the coordinate system to be in the center of the following disk \footnote{As a clarification, we temporarily relocate the origin of the coordinate system to the center of the disk and we fixed $q=(0,0,q_z)$.}.

Let $D$ be the disk the perimeter of which is defined as the circle $\partial(\ball(o,1)) \cap \partial(\ball(q, \alpha))$, where $\partial$ indicates the boundary of the closed space. Notice that after this translation of the coordinate system, $D$ is contained in the $x$-$y$ plane and that  $q,p$, and $b'$ can be represented as $q = (0,0, q_z)$, $p =(p_x,0,p_z)$, and $b' =(b'_x , b'_y  , b'_z )$.

For any point $s=(s_x,s_y,s_z)$ let $\bar{s}=(s_x,s_y,0)$ denote its projection onto the $x$-$y$ plane. See Figure~\ref{fig:plane} for an illustration of the projections $\bar{q}$, $\bar{p}$, and $\bar{b'}$ of $q,p$, and $b'$ on the $x$-$y$ plane, respectively.

\begin{figure} [ht]
    \input{pictures}
    \centering
    \tikzVIII
    \caption{\footnotesize {An illustration of $D$ and the projection points of $p$, $q$ and $b'$.}}
    \label{fig:plane}
\end{figure}
\begin{proposition} \label{cl:rotation}
    If $b'_y \neq 0$ then there exists $b'' \notin \interior(\ball (o,1))$ so that $||b''-p||^2 < ||b' - p||^2$ and so that $b''\in \interior(\ball (q, \alpha))$ if and only if $b'\in\interior(\ball (q, \alpha))$.
\end{proposition}
\begin{proof}
Consider $ b'' $ as a rotation of $ b' $ with respect to the line $(o,q)$, i.e., a rotation that preserves  disk $D$. We choose to rotate by the angle that will place $\overline{b''}$ as close to $\overline{p}$ as possible, i.e., $\overline{b''}$ will be co-linear with $\overline{q}$ and $\overline{p}$ ($\overline{b''}_y = \overline{q}_y = \overline{p}_y = 0$). Note that the rotation preserves the distances $||b'' - o||$ and $||b'' - q||$ to be equal to $||b' - o||$ and $||b' - q||$. To see that, assume that $\check{b'}$ is the orthogonal projection of $b'$ on the line containing $o$ and $q$ We have that $||b' - o|| = \sqrt{(||b' - \check{b}||)^2 + (||\check{b'} - o||)^2}$. The claim follows since the orthogonal projection of $b''$ on the line would also land on $\check{b'}$. The argument for $||b'' - q||$ follows in a similar way. As a consequence, since $b'$ is outside $\ball (o, 1)$, then $b''$ is also outside $\ball (o, 1)$. Also, $b''$ will fall outside of $\ball(q, \beta)$ if and only if $b'$ is outside $\ball(q, \beta)$.

We observe that $||b'_{z}-p_z||^2$  is constant w.r.t. rotations. This allows us to reduce our argument to analyzing the change of squared distances within the $x$--$y$ plane.

\begin{align*}
    ||b'-p||^2 & = ||b'_x-p_x||^2+ ||b'_{y}-p_y||^2 + ||b'_{z}-p_z||^2 \\
               & = ||\overline{b'} - \overline{p}||^2 + ||b'_{z}-p_z||^2 \\
               & > ||\overline{b''} - \overline{p}||^2 + ||b'_{z}-p_z||^2 \\
               & = ||b''_x-p_x||^2+ ||b''_{y}-p_y||^2 + ||b'_{z}-p_z||^2 \\
               & = ||b''_x-p_x||^2+ ||b''_{y}-p_y||^2 + ||b''_{z}-p_z||^2 \\
               & = ||b''-p||^2.
\end{align*}

It remains to observe that $||b''-p||^2 < ||b' - p||^2$ implies $||b''-p|| < ||b' - p||$.

\end{proof}

By Proposition \ref{cl:rotation}, we obtain that $\overline{b'}$ must be co-linear with $\overline{p}$ and $\overline{q}$, therefore $\gamma_\beta \leq \gamma'_\beta$ and point $b'$ may be assumed to be on the same plane $S$ as points $b$, $o$, and $p$. The fact that $b'' \notin \interior(\ball (o,1))$ ensures $b$ remains the closest point in $F$ to $o$ when replacing $b'$ with $b''$. The property that $b''\in \interior(\ball (q, \alpha))$ if and only if $b'\in\interior(\ball (q, \alpha))$ ensures that, after replacing $b'$ with $b''$, $o$ is assigned to the midpoints of $b''$ and $b$ if and only if we it was assigned to the midpoint of $b'$ and $b$ before the replacement. The two properties together guarantee that the replacement of $b'$ with $b''$ does not change $\sigma(o)$.

\subsection{Hardness of Discrete \kc}\label{sec:hardness-discrete}

For this section, we use the following explicit construction of the so-called \emph{$\eta$-balanced error-correcting codes} from a recent result of Ta-Shma~\cite{ta-shma17:explicit-balanced-codes} which we rephrase for our purposes as follows:

\begin{theorem}\label{thm:balanced-code}
Let $\eta\in (0,1/2)$ be a positive constant. Then there is an algorithm that computes, for any given number $s\in\bN$, an $s$-element set $B\subseteq\{0,1\}^t$ of binary vectors of dimension $t=\OO(\log s/\eta^{2+o(1)})$ such that for any $b\in  B$, its Hamming weight $||b||_1$ and for any $b'\in B\setminus\{b\}$, the Hamming distance $||b-b'||_1$ both lie in the interval $[(1/2-\eta)t,(1/2+\eta)t]$. The running time of the algorithm is $\OO(st)$.
\end{theorem}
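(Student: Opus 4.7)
My plan is to derive this theorem directly from Ta-Shma's original construction, which produces an \emph{$\eta$-balanced linear binary code}. Recall that a linear code $C\subseteq\{0,1\}^t$ is $\eta$-balanced if every nonzero codeword $c\in C$ has Hamming weight $\|c\|_1\in[(1/2-\eta)t,(1/2+\eta)t]$. Ta-Shma's main result gives, for any constant $\eta\in(0,1/2)$ and any target dimension $k\in\bN$, an \emph{explicit} $\eta$-balanced linear code of block length $t=\OO(k/\eta^{2+o(1)})$, together with a $\poly(2^k,t)$-time algorithm that enumerates all its codewords.

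First, I would set $k=\lceil\log_2(s+1)\rceil$ so that Ta-Shma's construction produces a linear code of dimension $k$ with $2^k\ge s+1$ codewords and block length $t=\OO(\log s/\eta^{2+o(1)})$, matching the required bound on $t$. I would then take $B$ to be any $s$ of the $2^k-1\ge s$ \emph{nonzero} codewords of this code. Since the code is $\eta$-balanced, each $b\in B$ is a nonzero codeword and hence satisfies $\|b\|_1\in[(1/2-\eta)t,(1/2+\eta)t]$, giving the claimed Hamming-weight bound for every element of $B$.

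Next, for pairwise Hamming distances I would invoke linearity: for any two distinct $b,b'\in B$, the vector $b\oplus b'$ is again a nonzero codeword of the linear code, so $\|b-b'\|_1=\|b\oplus b'\|_1$ also lies in $[(1/2-\eta)t,(1/2+\eta)t]$. This yields the required bounds on all $\binom{s}{2}$ pairwise distances for free. The running time of listing the chosen $s$ codewords is $\OO(st)$, since linear codes support encoding in time linear in block length and one simply encodes $s$ distinct nonzero message vectors. I do not foresee any substantial obstacle here: all the heavy lifting, namely the explicit construction of near-optimal $\eta$-balanced codes achieving rate $\Omega(\eta^{2+o(1)})$ via $s$-wide replacement products on expanders, is performed inside Ta-Shma's proof and we simply repackage its guarantee into the weight-plus-distance formulation convenient for the hardness reduction that follows.
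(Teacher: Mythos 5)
Your proposal is correct and matches the paper's own proof in substance: both invoke Ta-Shma's explicit construction of a linear code with logarithmic message length and block length $t=\OO(\log s/\eta^{2+o(1)})$, and both exploit linearity to move between the weight and distance guarantees. The only (cosmetic) difference in direction is that you start from the $\eta$-balanced formulation (weights of nonzero codewords) and deduce pairwise distances via $b\oplus b'$ being a nonzero codeword, whereas the paper states Ta-Shma's guarantee as a bound on pairwise Hamming distances and deduces the weight bound from the fact that a linear code contains the zero codeword; these are equivalent, and the rest of the argument (choice of $k\approx\log_2 s$, enumerating $s$ nonzero codewords in $\OO(st)$ time) is the same.
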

\begin{proof}
    Ta-Shma~\cite{ta-shma17:explicit-balanced-codes} gives an explicit construction of a $t\times \lceil \log_2 s\rceil$ binary matrix generating a linear, binary, error-correcting code of message length $\lceil\log_2 s\rceil$, block length $t=\OO(\log s/\eta^{2+o(1)})$, and pairwise Hamming distance between $(1/2-\eta)t$ and $(1/2+\eta)t$. Since the code is linear, it contains the zero code word. Hence each code word has Hamming weight in $[(1/2-\eta)t, (1/2+\eta)t]$. The time for constructing the matrix is polynomial in $\log s$ and $t$.  Using the generating matrix, at least $s$ many non-zero code words can be enumerated in time $\OO(st)$, which dominates the time for computing the matrix.

\end{proof}

We leverage balanced error correcting codes as gadget in our hardness proof for discrete \kc . For any binary vector $b\in\{0,1\}^t$, we denote by $\bar{b}$ the binary vector obtained by flipping each coordinate in $b$.

\hardnesseuc*

\begin{proof}
We show a reduction from \mcis, which is known to be $W[1]$-hard~\cite{CyganFKLMPPS15}. The input is a $k$-partite graph $G=(V,E)$ with $k$-partition $V_1,\dots,V_k$. The question is if there is an independent set that is \emph{multi-colored}, that is, it has precisely one node from each set $V_i$, $i\in[k]$. W.l.o.g.\ we assume that each $V_i$ contains at least one node that is adjacent to all nodes $V\setminus V_i$. Adding such nodes, we can additionally assume that $|V_i|=n/k$ for each $i\in[k]$ where $n=|V|$.

Fix some constant $\eta\in(0,1/2)$. Using Theorem~\ref{thm:balanced-code}, we construct a set $B\subseteq\{0,1\}^t$ of $n$ nearly equidistant code words of dimension $t=\OO(\log n / \eta^{2+o(1)})$. We map each node $u\in V$ uniquely to some non-zero code word $b(u)\in B$. 
We construct a \kc instance in $\bR^{k\cdot t}$ as follows. We subdivide the coordinates of each point in $\bR^{k\cdot t}$ into $k$ \emph{blocks} each containing $t$ consecutive coordinates. In our set $P$ of data points, we introduce for each node $v_i\in V_i$, $i\in [k]$, the point $p(v_i)\in P$ in which the $i$th block equals $b(v_i)$ and all other coordinates are zero. For each edge $(v_i,v_j)\in E$, $v_i\in V_i$, $v_j\in V_j$ for distinct $i,j\in[k]$ we create a point $p(v_i,v_j)\in P$ in which the $i$th block equals $\overline{b(v_i)}$, the $j$th block equals $\overline{b(v_j)}$, and all other coordinates are zero. No further points are added to~$P$. We set the number of centers to be $k$ completing the construction of the \kc instance.

Let $i\in [k]$ and $v_i,v_i'\in V_i$ be distinct vertices. We have that $||p(v_i)-p(v_i')||_q^q\leq ||b(v_i)-b(v_i')||_1\leq (1/2+\eta)t$ by Theorem~\ref{thm:balanced-code}. Let $v_j\in V_j$, $j\in [k]$ such that $(v_i,v_j)\in E$. By Theorem~\ref{thm:balanced-code}, we have that 
\begin{align*}
 ||p(v_i')-p(v_i,v_j)||_q^q & \leq ||b(v_i')-\overline{b(v_i)}||_1
+||\overline{b(v_j)}||_1\\
  & \leq (t-||b(v_i')-b(v_i)||_1)+(t-(1/2-\eta)t)\\
  & \leq (t-(1/2-\eta)t)+(1/2+\eta)t\\
 & \leq (1+2\eta)t\,.
\end{align*}
Hence if there is a multi-colored independent set $I$ for $G$ then $X=\{\,p(u)\mid u\in I\,\}$ is a $k$-element set such that $\delta(p,X)^q\leq (1+2\eta)t$ for any $p\in P$ under the $\ell_q$ metric, which gives an upper bound of $(1+2\eta)t$ on the \kc objective in the completeness case.

For analyzing the soundness case,  assume that there is no multi-colored independent set for $G$. Consider an arbitrary $k$-element set $X\subseteq V$. We say that $x\in X$ \emph{covers} $p\in P$ if $\delta(p,x)^q < (3/2-3\eta)t$. We claim that there is some~$p\in P$ not covered by any center in~$X$. The correctness of this claim implies that any parameterized approximation algorithm with approximation ratio strictly better than $((3/2-3\eta)/(1+2\eta))^{1/q}$ implies that $W[1]=\fpt$ and thus the theorem.

In order to prove this claim, we assume for the sake of contradiction, that all $p\in P$ are covered by some center in $X$. First, we argue that w.l.o.g.\ $X$ contains no point of the form $p(v_i,v_j)$ where $(v_i,v_j)\in E$. In fact, for any $g\notin\{i,j\}$, we have that
\begin{align}\label{eq:vert-edge-code}
\begin{split}
 ||p(v_g')-p(v_i,v_j)||_q^q &\geq
 ||b(v_g')||_1+||\overline{b(v_i)}||_1+||\overline{b(v_j)}||_1\\
 & \geq (1/2-\eta)t+2(t-(1/2+\eta)t)\\
 & = (3/2-3\eta)t\,.
 \end{split}
\end{align}
Hence $p(v_i,v_j)$ can cover $p(v_g')$ only if $g=i$ or $g=j$. Similarly, $p(v_i,v_j)$ can cover $p(v_g',v_h')$ only if $i=g$ and $j=h$. But then these points would be covered by $p(v_i)$ as well and hence we could replace $p(v_i,v_j)$ with $p(v_i)$. We therefore assume that $X$ contains only points of the form $p(v_i)$.

We claim that $X$ is multi-colored. Otherwise, there would be some $V_i$ that contains no point from $X$. By our initial assumption, $V_i$ contains some point $v_i$ that is adjacent to all points $V\setminus V_i$. Assuming $k\geq 3$ there exists at least one $V_j$, $j\neq i$ that contains at most one node from $X$. If $V_j$ intersects $X$ then let $v_j\in V_j \cap X$ , and otherwise let $v_j$ be an arbitrary node in $V_j$. By our assumption $(v_i,v_j)\in E$. If $v_j\in X$ then 
\begin{align}\label{eq:incident-code}
\begin{split}
 ||p(v_j,v_i)-p(v_j)||_q^q & \geq ||\overline{b(v_j)}-b(v_j)||_1+||\overline{b(v_i)}||_1\\
 & \geq t+(t-(1/2+\eta)t)\\
 & = (3/2-\eta)t
\end{split}
\end{align}
 as the $j$th block of $p(v_j)$ equals $b(v_j)$ and the $i$th block of $p(v_j,v_i)$ equals~$\overline{b(v_j)}$. If $v_j\notin X$ then for any $v_h\in X$ we have $h\notin\{i,j\}$. 
Thus $||p(v_i,v_j)-p(v_h)||_q^q\geq (3/2-3\eta)t$, which follows as in~(\ref{eq:vert-edge-code}). Hence $p(v_i,v_j)$ would not be covered showing that $X$ is multi-colored.
Since $X$ is multi-colored it can not be an independent set. Hence there exists some edge $(v_i,v_j)$ such that $v_i,v_j\in X$ but then $||p(v_i)-p(v_i,v_j)||_q^q\geq (3/2-\eta)t$, $||p(v_j)-p(v_i,v_j)||_q^q\geq (3/2-\eta)t$, and $||p(v_h)-p(v_i,v_j)||_q^q\geq (3/2-3\eta)t$ for any $v_h\in X$, $h\notin\{i,j\}$, which follows as in~(\ref{eq:incident-code}) and~(\ref{eq:vert-edge-code}), respectively. Hence $\delta(p(v_i,v_j),X)\geq (3/2-3\eta)t$, implies that $p(v_i,v_j)$ is not covered.

We complete the proof by noting that the dimension of the instance can be reduced to $O(\log n)$ for Euclidean metrics by using the Johnson-Lindenstrauss transform with sufficiently small (constant) error parameter.
\end{proof}

\section{EPAS for Metrics of Sub-Logarithmic Doubling Dimension}\label{sec:paskd_sfc}

In this section, we show an EPAS for \rkc in  metrics of  sub-logarithmic doubling dimension. This result complements the hardness result of Section~\ref{sec:discrete} (Theorem~\ref{thm: LB-intro}). Towards our goal, we prove the following result. 

\epasdoubling*

Note that the above algorithm runs in \fpt time for $d=o(\log n)$.
We also remark that the above result can be extended to the continuous $\mathbb{R}^d$.
Throughout this section, we assume that the weight aspect ratio $\frac{\max_{p \in P} w(p)}{\min_{p' \in P} w(p')}$ and the distance aspect ratio $\frac{\max_{p,p' \in P} \delta(p,p')}{\min_{p\ne p' \in P} \delta(p,p')}$ are bounded by $\poly(n)$, some polynomial in $n$.
For $p\in P$ and any number $r \ge 0$, denote by $\ball(p,r)$ to be the closed ball centered at $p$ of radius $r$.
We prove the theorem in two steps: first, in Section~\ref{ss:coreset_sfc} we show an algorithm to obtain a  coreset for the problem, and then, in Section~\ref{ss:pas_kd_sfc} we show how to use this coreset to get the algorithm of Theorem~\ref{thm:pas_kd}.

\subsection{Coreset for \rkc} \label{ss:coreset_sfc}
The key idea for constructing coresets for \rkc crucially relies on the following alternate but equivalent definition of the problem. In this definition, we are given $\cI=(F, P\subset \cM,\cW)$, where either $F=\cM$ or $F\subseteq \cM$, where $\cM$ is doubling metric of dimension $d$, defined by the metric function $\delta$.
A \emph{group} is a  weight vector $\vecw \in \cW$ such that $\vecw: P \rightarrow \mathbb{R}_{\geq 0}$. 
Given $X \subseteq F$, the distance vector $\vecd_{P}(X)$ is defined as ${\vecd }_{P}(X)[p] = \delta(p,X)^z$, for each $p \in P$. 
The cost of $X$ for a group $\vecw \in \cW$ is defined as $c(\vecw, X) = \vecw \cdot \vecd_P(X)$.
For  a \rkc instance $\cI= (F,P,\cW)$, the cost of $X$ is defined as $\cost(\cI,X) = \max_{\vecw \in \cW}\cost(\vecw,X)$.
The cost of the  instance $\cI= (F,P,\cW)$ is 
\[\opt(\cI) = \min_{X \subseteq F, |X|=k} \max_{\vecw \in \cW} \cost(\vecw,X) 
\] 
Whenever the instance $\cI$ is clear from context, we will just write $\opt$. 
Notice that, in the original \rkc, a group is given by $S \subseteq P$, and this can be captured by weight vector $\vecw[p] = 0$ for $p \not \in S$ and $w(p)$ otherwise. We prove the following coreset exists for  \rkc.

\begin{restatable}[Coreset for \rkc]{theorem}{coresetrkc}
\label{thm: coresetrobust}
Given an instance $\cI=(F,P,\cW)$ of \rkc in doubling metric of dimension $d$ and $0 < \epsilon \le 1$, there is an algorithm that, in time $\left(\frac{2^z}{\epsilon}\right)^{\bigO{d}}\poly(n,m)$, computes another instance  $\cI'=(F,P',\cW')$ of \rkc with $P' \subseteq P: |P'|= \left(\frac{2^z}{\epsilon}\right)^{\bigO{d}} kz\log n$ such that for any $X \subseteq F$ with $\abs{X} = k$,
\[(1-\epsilon) \cost(\cI,X) \le \cost(\cI',X)  \le (1+\epsilon) \cost(\cI,X). \]

\end{restatable}

We remark that the above theorem yields a coreset of clients, and not of groups, and hence, the total size of coreset is comparable to the original instance. 
However, we will show later that such coreset is sufficient to get a parameterized approximation scheme with parameters $k$ and $d$.
We would also like to point out that the exponential dependency on $d$ on the point set size of the coreset is inevitable since \rkc captures \kc, for which such a lower bound is known~\cite{braverman2019coresets, pmlr-v119-baker20a}. To see that our notion of coreset for \rkc coincides with the regular notion of coreset for \kc, note that in this setting each group contains a single distinct point.

In the next section, we describe the algorithm of Theorem~\ref{thm: coresetrobust}.

\paragraph*{The Algorithm.}

Our algorithm (See Algorithm~\ref{algo:coreset}) is inspired by the grid construction approach of~\cite{10.1145/1007352.1007400} that yields coresets for \km\ and \kmns. 
Given an instance $\cI=(F,P,\cW)$ of \rkc, the first step is to start with an $(\alpha,\beta)$-bicriteria solution $B=\{b_i\}_{i \in [\beta k]}$ that opens at most $\beta k$ facilities with the guarantee that $\cost({\cI},B) \le \alpha \cdot \opt$, for some constants $\alpha,\beta \ge 1$. Let $R = \sqrt[z]{\frac{\cost({\cI},B)}{\alpha \tau}}$, where $\tau := \max_{\vecw \in \cW} ||\vecw||_1$.
Let $\Delta = \frac{\max_{p \in P, \vecw \in \cW} \vecw[p]}{\min_{p \in P, \vecw \in \cW} \vecw[p]}$ be the weight aspect ratio of $\cI$. 
Then, for each $b_i \in B$, 
consider the balls $\cB^j_i := \ball(b_i,2^j R)$, for $j \in \{0,\cdots, \lceil 2 \log(\alpha n \Delta) \rceil\}$. Note that, for $\vecw \in \cW$ and $p\in P$ with $\vecw[p]>0$, it holds that $\delta(p,B) \le R\sqrt[z]{\alpha n \tau}$, since $\delta(p,B) \le \sqrt[z]{\frac{\cost({\cI},B)}{\vecw[p]}} \le \sqrt[z]{\frac{\alpha\tau}{\vecw[p]}}R \le R\sqrt[z]{\alpha n \Delta}$.
Hence, we have that every point $p \in P$ is contained in some ball $\cB^j_i$. 
For $b_i \in B$, let $\cQ^j_i = \cB^j_i - \cB^{j-1}_i$, for $j= \{1,\cdots, \lceil 2 \log(\alpha \Delta) \rceil\}$, be the ring between $\cB^j_i$ and $\cB^{j-1}_i$, with $\cQ^0_i = \cB^0_i$.
Decompose every ball $\cB^j_i$ into smaller balls each of radius $\frac{\epsilon }{40 \alpha}R 2^j$ using the fact that the metric is a doubling metric.
These balls can intersect, so we assign  every point $p \in P$ to exactly one ball (for example, by associating $p$ to the smallest ball containing $p$, breaking ties arbitrarily).\\
For every ball $\cB^j_i$ and every smaller ball $t$ of $\cB^j_i$ with $|t \cap \cQ^j_i| \ne \emptyset$, pick an arbitrary point $p' \in t \cap \cQ^j_i$ as the \emph{representative} of (the points in) $t\cap \cQ^j_i$, and add $p'$ to the coreset $P'$ with group weight vectors as follows. Corresponding to every group vector $\vecw \in \cW$, create a new group vector $\vecw' \in \cW'$. Then,$
\vecw'[p'] := \sum_{p \in t \cap Q^j_i} \vecw(p)$.
Intuitively, $\vecw'[p']$ captures the total weight of points of $\vecw$ in $t \cap \cQ^j_i$. This concludes the coreset construction.

The high-level idea above is to decompose each ball $\cB^j_i$ into smaller balls and pick a distinct point as the representative of points in the non-empty decomposed ball. Additionally, such representative $p'$ participates in the group $\vecw'$ with weight which is  sum of the weights of  points in $\vecw$ that are represented by $p'$. 
However, we want to decompose the ball $\cB^j_i$ into smaller balls in a way  that the total number of balls remains the same, irrespective of the radius of the ball. This is necessary as for higher values of $j$, this number would depend on $n$, if we are not careful. While this does not seem to help much, as the radius of the decomposed balls is much large for higher values $j$, it actually does the trick:  since the points in these balls are far from $b_i$, and hence their connection cost to $b_i$ is also large. This allows us to represent the radii of larger balls in terms of the connection cost of its points to $B$, thus bounding the error in terms of the cost of $B$, which in turn is bounded by $\alpha \opt$, which gives us the desired guarantee.

\begin{algorithm}[t]
\caption{Coreset construction for \rkc} \label{algo:coreset}
  \KwData{Instance $\cI=(F,P,\cW)$ of \rkc, $(\alpha,\beta)$-bicriteria solution $B$ for $\cI$}
  \KwResult{Coreset $\cI'=(F,P',\cW')$  for $\cI$}
    Let $P' \leftarrow \emptyset$\;
    Let $\vecw'   \leftarrow \textbf{0}$ for $\vecw' \in \cW'$\;
    Let $\tau \gets \max_{\vecw \in \cW} ||\vecw||_1$\;
    Let $R= \sqrt[z]{\frac{\cost({\cI},B)}{\alpha \tau}}$\;
    For each $b_i \in B$ for $j \in\{0,1,\cdots, \lceil 2\log \alpha n \Delta \rceil \}$, let $\cB^j_i = \ball(b_i,2^jR)$, and let $\cQ^j_i = \cB^j_i - \cB^{j-1}_i$  with $\cQ^0_i=\cB^0_i$\;
    
    Decompose each ball $\cB^j_i$ into balls of radius each $\frac{\epsilon}{\alpha 3^{z+2}}2^j R$ \tcp*{e.g., use Lemma~\ref{apd:balldecomp}}
    
    Associate each point $p \in P$ to a smallest ball containing $p$ breaking ties arbitrary\;
    
    \ForEach{$i\in[k]$}{
        \ForEach{ $j \in\{0,1,\cdots, \lceil 2\log \alpha n \Delta \rceil \}$}{
            \ForEach{smaller ball $t$ of $\cB^j_i$}{
                \If{$\exists p \in t \cap Q^j_i $}{
                     $P' \gets P' \cup p$ \;
                     \ForEach{$\vcw \in \cW$}{Set the corresponding weight vector $\vcw'[p] = \sum_{p' \in t \cap \cQ^j_i} \vcw[p']$\;}               
                     \textbf{break}\;
                }
            }
          }
     }
    \Return{$\cI':=(F,P',\cW')$\;}
\end{algorithm}

\paragraph*{Analysis.}
First, let us bound $|P'|$. 
Note $|P'|= \OO(|B|\log(\alpha n) \left(\frac{\alpha 3^z}{\epsilon}\right)^{\bigO{d}})$, assuming $\Delta = \poly(n,m)$.
We will use the following bi-criteria algorithm for \rkc due~\cite{pmlr-v134-makarychev21a}.
\begin{theorem}[~\cite{pmlr-v134-makarychev21a}]\label{thm:res:bic}
There exists a polynomial time algorithm for \rkc  that, for every $\gamma \in (0,1)$, outputs at most $\frac{k}{1-\gamma}$ centers whose cost is bounded by $\frac{e^{O(z)}}{1-\gamma}$ times the optimal cost. 
\end{theorem}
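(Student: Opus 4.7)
The plan is to follow the standard LP-rounding template for \rkc. First I would formulate the natural LP relaxation: introduce variables $y_f \in [0,1]$ for opening facility $f \in F$ and $x_{pf} \in [0,1]$ for assigning client $p$ to $f$, together with a guessed threshold $T$ on the per-group cost. The constraints are $\sum_f x_{pf} = 1$ for each $p \in P$, $x_{pf} \le y_f$, and $\sum_f y_f \le k$, augmented with $\sum_{p \in S_i} w(p)\sum_f x_{pf}\,\delta(p,f)^z \le T$ for every group $i \in [m]$. Binary-searching over the polynomially many distinct values of $\delta(p,f)^z$, I would find the smallest $T^{\ast}$ for which the LP is feasible; clearly $T^{\ast} \le \opt$.

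Next I would round the fractional solution $(x^{\ast}, y^{\ast})$ via a filtering-and-bundling scheme in the spirit of Charikar et al. For each client $p$, let $d_p^{\ast} := \sum_f x^{\ast}_{pf}\,\delta(p,f)^z$ be its fractional $z$-cost. A Markov-style filter would truncate the fractional assignment to facilities within $z$-cost $d_p^{\ast}/\gamma'$ of $p$, for a parameter $\gamma' \in (0,\gamma)$, leaving residual opening mass at least $1-\gamma'$ around every $p$. I would then build disjoint \emph{bundles} by greedily processing pivots in non-decreasing order of $d_p^{\ast}$: each pivot $p$ claims all residual facilities still reachable from it, and subsequent pivots must avoid already-claimed facilities. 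Since each bundle carries at least $1-\gamma'$ units of $y^{\ast}$-mass while $\sum_f y_f^{\ast} \le k$, the number of bundles is at most $k/(1-\gamma')$; opening one facility per bundle (say, the one nearest its pivot) then yields the promised bi-criteria budget $k/(1-\gamma)$ after a mild parameter rescaling.

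The per-client cost analysis proceeds pointwise. For a pivot $p$, the opened facility in its bundle lies within $z$-cost $\le d_p^{\ast}/\gamma'$ by the filtering step. For a non-pivot $p$ whose mass was absorbed by an earlier pivot $q$, the ordering gives $d_q^{\ast} \le d_p^{\ast}$, and a constant number of applications of the $z$-power triangle inequality $(a+b)^z \le 2^{z-1}(a^z + b^z)$ bounds its rounded $z$-cost by $e^{O(z)} \cdot d_p^{\ast}/(1-\gamma)$. Summing the $w(p)$-weighted contributions over $p \in S_i$ and using the LP guarantee $\sum_{p \in S_i} w(p)\, d_p^{\ast} \le T^{\ast} \le \opt$, each group's rounded cost is at most $e^{O(z)}/(1-\gamma)$ times \opt, as required.

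The main obstacle is ensuring the per-client blow-up is bounded \emph{simultaneously} across all $m$ groups, not merely in aggregate: the min-max objective precludes any averaging argument, so each individual client's rounded $z$-cost must be controlled pointwise by a multiplicative $O(1)^z/(1-\gamma)$ factor of its fractional cost $d_p^{\ast}$. This is exactly what the Markov filter combined with the greedy pivot ordering delivers, since both the pivot's assigned distance and the non-pivot's detour through its pivot are charged only to its own $d_p^{\ast}$. A second delicate point is tracking the dependence on $z$: each application of the $z$-power triangle inequality contributes a factor $2^{O(z)}$, so one must verify that only a bounded (independent of $n, k, m$) number of such applications are needed per client, yielding the claimed $e^{O(z)}$ multiplier.
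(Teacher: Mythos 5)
The paper does not prove this theorem: it is stated with a citation to Makarychev and Vakilian~\cite{pmlr-v134-makarychev21a} and is invoked as a black box (with $\gamma = 1/2$) inside the coreset construction, so there is no in-paper proof to compare against. What you have written is a reconstruction of the cited result, and your template --- an LP relaxation with a per-group cost constraint, a Markov filter, and Charikar--Guha--Tardos--Shmoys-style bundling --- is the right one. You also correctly identify the structural reason the approach survives the min-max objective: the rounding bounds \emph{each client's} rounded $z$-cost pointwise by $e^{O(z)} d_p^{\ast}$, so the group-wise sums are controlled for all groups simultaneously.

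Two places need tightening. First, the bundling step as written (``subsequent pivots must avoid already-claimed facilities'') does not by itself guarantee that every pivot's bundle keeps mass at least $1-\gamma'$: a later pivot could find most of its filtered $y$-mass already claimed. The standard fix is to designate $p$ a pivot only if no earlier pivot $q$ lies within a constant multiple of $p$'s filter radius; processing in nondecreasing $d_p^{\ast}$ then forces pivots' filtered balls to be pairwise disjoint, so each bundle retains its full residual mass $\geq 1-\gamma'$, and the count $k/(1-\gamma')$ follows. Second, your parameter accounting does not actually produce the displayed bound. With filter threshold $d_p^{\ast}/\gamma'$ one gets at most $k/(1-\gamma')$ facilities and a per-client cost blow-up of $e^{O(z)}/\gamma'$ --- the usual $\bigl(k/(1-\gamma'),\, e^{O(z)}/\gamma'\bigr)$ tradeoff --- whereas the theorem as restated here has \emph{both} quantities scaling like $1/(1-\gamma)$. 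No choice of $\gamma' \in (0,\gamma)$ turns the former into the latter for all $\gamma$, and the ``mild rescaling'' you gesture at cannot close this. Since the paper instantiates only $\gamma = 1/2$, the two forms coincide where it matters, but you should state the tradeoff your argument actually yields rather than claim it matches the theorem's parameterization verbatim.
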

Invoking Algorithm~\ref{algo:coreset} with the above bi-criteria solution for $\gamma=1/2$, we get $|P'|= \left(\frac{2^z}{\epsilon}\right)^{\bigO{d}} kz\log n$, as desired. Note that the running time of the algorithm is $\left(\frac{2^z}{\epsilon}\right)^{\bigO{d}}\poly(n)$.

 We now argue that $\cI'$ is indeed a coreset for $\cI$.
    Let $\opt(\cI)$ be the cost of optimal solution for $\cI$.
    Fix a feasible solution $X \subseteq F, |X|=k$ and let $\hat{\vcw} \in \cW$ be a maximizer of the \rkm cost of $\cI$ for $X$. We claim that, for any $\vcw' \in \cW'$, it holds that
       \[
                \cost(\vcw,X) - \epsilon \cost(\hat{\vcw},X) \le \cost(\vcw',X) \le (1+\epsilon) \cost(\hat{\vcw},X),
    \]
    where  $\vcw \in \cW$ is the corresponding weight vector to $\vcw'$. Fix any $\vcw' \in \cW'$ and the corresponding $\vcw \in \cW$.
For $p \in P$, let $r(p)\in P'$ be the representative of $p$.
Using the inequality\footnote{This can be proved using induction on $z$.}  $|a^z-b^z| \le |(a-b)(a+b)^{z-1}|$, for $a,b\ge 0$, we have that the total error $\cE:=|\cost(\vecw',X) -\cost(\vecw,X)|$ is bounded by, 
\[
\cE  \le \sum_{p \in P} |(\vecw[p]\delta(p,X)^z - \vecw[p]\delta(r(p),X)^z)|\le \sum_{p \in P} \vecw[p]|(\delta(p,X) - \delta(r(p),X))(\delta(p,X) + \delta(r(p),X))^{z-1}|
\]
Note that $|(\delta(p,X) - \delta(r(p),X))| \le |(\delta(p,r(p)) + \delta(r(p),X) - \delta(r(p),X))| \le \delta(p,r(p))$. Further, $\delta(p,X) + \delta(r(p),X) \le 2\delta(p,X) + \delta(p,r(p))$. Hence,
\[
\cE \le \sum_{p \in P} \vecw[p] \cdot \delta(p,r(p)) (2\delta(p,X) + \delta(p,r(p)))^{z-1}.
\]
To bound $\cE$, we divide the points in $P$ in three parts, and bound the errors on each part separately. Let $P_R:=\{p \in P \mid \delta(p,B) \le R \ \&\  \delta(p,X)\le R\}$, $P_B:= \{p \in P \mid \delta(p,B) > R \text{ and } \delta(p,X) \le \delta(p,B) \}$,  and $P_X:= \{p \in P \mid \delta(p,X) > R \text{ and } \delta(p,B) \le \delta(p,X)\}$.

For $p \in P_R$, we have $\delta(p,r(p)) \le \frac{\epsilon R}{\alpha 3^{z+1}} $, and $\delta(p,X)\le R$, and hence we have,
\begin{align*}
\cE_R &\le \sum_{p\in P_R} \vecw[p] \frac{\epsilon R}{\alpha  3^{z+1}}\left(2R+\frac{\epsilon R}{\alpha  3^{z+1}}\right)^{z-1} \\
& \le \frac{\epsilon R^2}{\alpha  3^{z+1}} \left(2+\frac{\epsilon}{\alpha  3^{z+1}}\right)^{z-1} \sum_{p\in P_R} \vecw[p] \\
&\le \frac{\epsilon \opt}{9 \alpha\tau} \sum_{p\in P_R} \vecw[p] \\
&< \frac{\epsilon }{3}\opt \qquad \text{since } \tau \ge ||\vecw||_1. \\
\end{align*}
For $p \in P_B$, suppose  $2^j R \ge \delta(p,B) > 2^{j-1} R$, for some $j\ge 1$. Then, note that
$
\delta(p,r(p)) \le 2\frac{\epsilon 2^{j} R}{\alpha  3^{z+2}} < 2\frac{2\epsilon \delta(p,B)}{\alpha 3^{z+2}} < \frac{\epsilon}{\alpha  3^z} \delta(p,B)$. 
Hence, we bound $\cE_B$ using  the fact $\delta(p,X) \le \delta(p,B)$,
\begin{align*}
    \cE_B &\le \sum_{p\in P_B} \vecw[p] \frac{\epsilon}{\alpha  3^z} \delta(p,B)^z \left(2+ \frac{\epsilon}{\alpha 3^z} \right)^{z-1}
    \le \frac{\epsilon}{3\alpha} \sum_{p\in P_B} \vecw[p]  \delta(p,B)^z
    \le  \frac{\epsilon}{3}\opt(\cI).
\end{align*}
For $p \in P_X$, suppose $\delta(p,B) > R$, then $2^j R \ge \delta(p,B) > 2^{j-1} R$, for some $j\ge 1$. In this case, we have
$
\delta(p,r(p)) \le 2\frac{\epsilon 2^{j} R}{\alpha  3^{z+2}} < 2\frac{2\epsilon \delta(p,B)}{\alpha 3^{z+2}} < \frac{\epsilon}{\alpha  3^z} \delta(p,B) \le \frac{\epsilon}{\alpha  3^z} \delta(p,X)$. Otherwise, $\delta(p,B) \le R$, in which case $\delta(p,r(p)) \le 2\frac{\epsilon R}{\alpha  3^{z+2}} < \frac{2\epsilon}{\alpha 3^{z+2}} \delta(p,X)$.
Hence, 
\begin{align*}
    \cE_X &\le \sum_{p\in P_X} \vecw[p] \frac{\epsilon}{\alpha 3^z} \delta(p,X)^z \left(2+ \frac{\epsilon}{\alpha 3^z} \right)^{z-1} 
    \le \frac{\epsilon}{3\alpha} \sum_{p\in P_X} \vecw[p]  \delta(p,X)^z
    \le  \frac{\epsilon}{3}\cost(\vecw,X).
\end{align*}
Now, 
   \begin{align*}
       \cost(\vcw',X) &\le \cost(\vcw,X) + \epsilon \opt(\cI)\\
       &\le  \cost(\vcw,X) + \epsilon \cost(\hat{\vcw},X) \qquad \text{since } \opt(\cI) \le \cost(\cI,X) = \cost(\hat{\vcw},X)\\
       &\le (1+\epsilon) \cost(\hat{\vcw},X) \qquad \text{since } \cost(\vcw,X) \le \cost(\hat{\vcw},X).
   \end{align*}
   Similarly,       $\cost(\vcw',X) \ge \cost(\vcw,X) - \epsilon \opt(\cI)
       \ge  \cost(\vcw,X) - \epsilon \cost(\hat{\vcw},X)$  since  $\opt(\cI) \le  \cost(\hat{\vcw},X)$.
    
    Now, we finish the proof as follows.
    \[
                \cost(\cI',X) = \max_{\vcw' \in \cW} \cost(\vcw',X) \le (1+\epsilon) \cost(\hat{\vcw},X) = (1+\epsilon) \cost(\cI,X)
    \]
    On the other hand,
    \[
             \cost(\cI',X) = \max_{\vcw' \in \cW} \cost(\vcw',X) \ge \max_{\vcw \in \cW} \cost(\vcw,X) - \epsilon \cost(\hat{\vcw},X) = (1-\epsilon) \cost(\cI,X).
    \]

\subsection{EPAS for \rkc} \label{ss:pas_kd_sfc}
In this section, we show how to use the coreset obtained from Theorem~\ref{thm: coresetrobust} to get a $(1+\epsilon)$-approximate solution to the \rkc problem and provide an EPAS with respect to $k$ and $d$, when $|P|$ is small. 
By scaling the distances in the instance of \rkc, we assume that the distances are between $1$ and $\Delta'$, for some number $\Delta'$.
Our algorithm (see Algorithm~\ref{alg:epas_kd}) uses the leader guessing idea of~\cite{cohenaddad_et_al:LIPIcs:2019:10618}. In the leader guessing approach, we guess the leader of every partition of a fixed optimal solution, where the leader of a partition is a closest point (client) in $P$ to the corresponding optimal center. However, each point can participate in multiple groups, resulting in the total number of points being dependent on the number of groups, $|\cW|$.
But, we will show next that guessing the leaders from $P$ without considering the groups in $\cW$ is, in fact, sufficient. Further, to get a $(1+\epsilon)$-approximate solution, we use a standard ball decomposition lemma (e.g., use Lemma~\ref{apd:balldecomp}).

\begin{restatable}{theorem}{leaderelection}\label{thm:pas_kdp}
For any $0< \epsilon \le 1$, Algorithm~\ref{alg:epas_kd}, on input $\cI= (F,P,\cW)$, computes $X \subseteq F: |X| \le k$ such that $\cost({\cI},X) \le (1+ \epsilon)\opt(\cI)$
in time $\left((\frac{z}{\epsilon})^d \log n \right)^{\bigO{k}}|P|^{k}   \poly(n,m)$.

\end{restatable}
\begin{proof}
First we bound the runtime of the algorithm. The leader enumeration (first forall loop) requires at most  $|P|^k$ loops, one for each $k$-tuple of $P$. 
Assuming $\Delta' = \poly(n)$,
the radii enumeration (second forall loop) for the leaders requires at most $\left(\frac{\nicefrac{\log n}{\epsilon}}{\epsilon}\right)^{\bigO{k}} = (\log n/\epsilon)^{\bigO{k}} $ loops using discretized steps of size that is power of $(1+\nicefrac{\epsilon}{10z})$.
Finally, there are at most $(\frac{z}{\epsilon})^{\bigO{dk}}$ many $k$-tuples of $T_1\times \cdots \times T_k$ since $|T_i| = (\frac{z}{\epsilon})^{\bigO{d}}$, yielding the claimed runtime.

For correctness, we will show that, for any group $\vecw \in \cW$, we have $\cost(\vecw,X) \le (1+\epsilon)\cost(\vecw,O)$, which implies  $\cost({\cI},X) \le (1+\epsilon)\cost({\cI},O)$, where $O =\{o_1,\cdots, o_k\}$ is an optimal solution center. Let $\Pi_i$ be the set of points in $P$ served by $o_i$, for $i \in [k]$. Let $\ell^*_i \in \Pi_i$ be a  point that is closest to $o_i$. Let this distance be $\lambda^*_i$, i.e., $\lambda^*_i := \delta(\ell^*_i,o_k)$. We call $\ell^*_i$ as the leader of $\Pi_i$ with radius $\lambda^*_i$.
Let $\lambda'_i$ be the smallest number equal to some power of $(1+\nicefrac{\epsilon}{10z})$ that is larger than $\lambda^*_i$. Then, note that $\lambda'_i \ge \lambda^*_i \ge \frac{\lambda'_i}{(1+\epsilon/10z)}$. 
Next, consider the $\frac{\epsilon}{20z}$-ball decomposition of $\ball(\ell^*_i,\lambda'_i)$, and let $b^*_i$ be a ball containing $o_i$  and let $t^*_i$ be its center. Now, consider the iteration of the algorithm corresponding to leader-radii enumeration $(\ell^*_1,\cdots,\ell^*_k)$ and $(\lambda'_1,\cdots,\lambda'_k)$ and, center enumeration $(t^*_1,\cdots, t^*_k)$. Then, for any $p \in \Pi_i, i \in [k]$, we have that
\[
\delta(p,t^*_i) \le \delta(p,o_i) + \delta(o_i,t^*_i)  \le \delta(p,o_i) + \frac{\epsilon}{10z} \lambda'_i \le  \delta(p,o_i) + \frac{\epsilon}{10z} \left(1+\frac{\epsilon}{10z}\right)\lambda^*_i \le  \left(1+\frac{\epsilon}{5z}\right) \delta(p,o_i).
\]
Hence, for any group $\vecw \in \cW$, we have
$\vecw[p] \delta(p,t^*_i)^z \le (1+\frac{\epsilon}{5z})^z \vecw[p]\delta(p,o_i)^z \le (1+\epsilon) \vecw[p]\delta(p,o_i)^z $, where we used the inequality $(1+\frac{\epsilon}{5z})^z \le e^{\epsilon/5} \le 1+ \frac{\epsilon}{5} + (\frac{\epsilon}{5})^2 + \cdots  \le 1+ \epsilon$.
Hence, $\cost(\vecw,X) = \sum_{p\in P}w[p]\delta(p,X)^z \le (1+\epsilon)\cost(\vecw,O)$.
\end{proof}
We conclude this section by proving the main claim of this section (Theorem~\ref{thm:pas_kd}) by using the results of Theorem~\ref{thm: coresetrobust} and Theorem~\ref{thm:pas_kdp} as follows.

\begin{proof}[Proof of Theorem~\ref{thm:pas_kd}]
Given an instance $\cI=(F,P,\cW)$ of \rkc, and the accuracy parameter $\epsilon>0$, we invoke Theorem~\ref{thm: coresetrobust} on $\cI$ with parameter $\epsilon/10$ to obtain an coreset $(P',\cW')$ such that  $P' \subseteq P:  |P'|= \left(\frac{2^z}{\epsilon}\right)^{\bigO{d}} kz\log n$. Let $\cI'=(F,P',\cW')$ be the resulting instance. Then, we invoke  Theorem~\ref{thm:pas_kdp} on $\cI'$ with parameter $\epsilon/10$ to obtain $X \subseteq F: |X| \le k$ such that $\cost({\cI'},X) \le (1+ \epsilon/10)\opt(\cI')$.

First we analyze the overall running time. 
With $|P'|= \left(\frac{2^z}{\epsilon}\right)^{\bigO{d}} kz\log n$, Theorem~\ref{thm:pas_kdp} runs in time  $\left(\left(\frac{2^z}{\epsilon}\right)^d kz \log n\right)^{\bigO{k}} \poly(n,m)$, leading to
$\left(\left(\frac{2^z}{\epsilon}\right)^d zk \log k\right)^{\bigO{k}} \poly(n,m)$ as the overall running time as desired. For correctness, consider
\begin{align*}
    \cost({\cI},X) &\le (1+\epsilon/10) \cost(\cI',X) \qquad \text{by the coreset property} \\
            &\le (1+\epsilon/10)^2 \opt(\cI') \qquad \text{by Algorithm~\ref{alg:epas_kd}} \\
            &\le (1+\epsilon/10)^3  \opt(\cI) \qquad \text{by the coreset property} \\
            &\le (1+\epsilon) \opt(\cI).
\end{align*}

\end{proof}
{\footnotesize
\begin{algorithm}[H] 
\caption{$(1+\epsilon)$-approximation algorithm for \rkc} \label{alg:epas_kd}
  \KwData{Instance $\cI=(F,P,\cW)$ of \rkc}
  \KwResult{$(1+\epsilon)$-approximate solution $X \subseteq F$}
    Let $X \leftarrow \emptyset$\;
    \ForAll{$k$-tuples $(\ell_1,\cdots,\ell_k)$ of $P$}{
        \ForAll{$k$-tuples $(\lambda_1,\cdots,\lambda_k)$ radii of $(\ell_1,\cdots,\ell_k)$  that are power of $(1+\nicefrac{\epsilon}{10z})$
        }{
        \For{$i \in [k]$}{
             $\cB_i \leftarrow \{\frac{\epsilon}{20z}$-ball decomposition of $\ball(\ell_i,\lambda_i)$\}\;
             } 
             $T_i \leftarrow \{f \in F \mid f \text{ is an arbitrary facility in ball } b \in \cB_i \}$
                                  \footnote{If $F=\rd$ then $T_i \leftarrow \{x_b \in F \mid x_b \text{ is the center of ball } b \in \cB_i \}$}
             \;
        \ForAll{$k$-tuples $(t_1,\cdots,t_k)$ of $T_1\times \cdots \times T_k$}{
            \If{$\cost({\cI},\{t_1,\cdots,t_k\}) < \cost(\cI,X)$}{
                $X \leftarrow \{t_1,\cdots,t_k\}$
                }
            }
        }
    }
    \Return{$X$}
\end{algorithm}
}
\begin{lemma}[Ball decomposition lemma]\label{apd:balldecomp}
Consider a metric space $(X,\delta)$
with doubling dimension $d$. A subset 
$A\subseteq X$ is {\it $\epsilon$-dense} in $X$, if $\forall x\in X$, $\exists y\in A$ such that $\delta(x,y)\leq \epsilon$.
$A$ is {\it $\epsilon$-separated}, if
$\forall x \neq y\in A$, $\delta(x,y)> \epsilon$
and  $A$ is $\epsilon$-net of $X$ if $A$ is $\epsilon$-separated as well as $\epsilon$-dense.
 Then we have the following: 
 \begin{enumerate}
\item There exists an $\epsilon$-dense set $A\subseteq \ball{(x,r)}$ of size $(\frac{r}{\epsilon})^{O(d)}$  $\forall x\in X$, $r>0$, $\epsilon \leq r/2$. 
\item For all $\epsilon$-separated set $A\subseteq \ball{(x,r)}$, $r>0$, $\epsilon \leq r/2$ it holds that $|A|\leq (\frac{r}{\epsilon})^{O(d)}$.
\end{enumerate}
\end{lemma}
\begin{proof}
\begin{enumerate}
\item Let us denote $k$ as the doubling constant of the considered metric space $(X,\delta)$ and $m=\lceil \log (r/\epsilon) \rceil$. Then by the definition, we have $k=2^{O(d)}$. Note \ball{(x,r)} can be covered with $k$ balls of radius $r/2$. Further each of these balls of radius $r/2$ can be covered with $k$ balls of radius $r/4$ resulting the original ball \ball{(x,r)} can be covered with $k^m$ balls of radius $\frac{r}{2^m}$ (note $\frac{r}{2^m}\leq \epsilon$). Again since $k=2^{O(d)}$, a simple calculation shows that $k^m=(\frac{r}{\epsilon})^{O(d)}$. Let $A$  be the centers of these balls then clearly $A$ is $\epsilon$-dense as required. 
\item  Note that the balls of radius $(\frac{\epsilon}{2})$ around the points of $A$ are disjoint as the points in the set $A$ have pairwise distance strictly greater than $\epsilon$
  and further their union is included in the $\ball{(x,r+\frac{\epsilon}{2})}$. Hence, $A$ is at most the size of any $\epsilon/2$-dense set within the $\ball{(x,r+\frac{\epsilon}{2})}$. By the previous claim, we finally prove that there exists an $\epsilon/2$-dense set in  $\ball{(x,r+\frac{\epsilon}{2})}$ of size $(\frac{2r}{\epsilon})^{O(d)}$.\\
One can even construct the set $A$ greedily. Initially $A=\phi$. Next we choose an arbitrary point in \ball{(x,r)} and add it to $A$. Let $B$ denote the union of the closed balls of radius $\epsilon$ around the points in $A$. As long as we find a point $p\in \ball{(x,r)}\setminus B$, we add this particular point to the set $A$. Note, when the algorithm stops the resulting set $A$ becomes an $\epsilon$-net as both $\epsilon$-dense as well as $\epsilon$-separated conditions are being satisfied. 
\end{enumerate}
\end{proof}

\section*{Acknowledgements}
Fateme Abbasi and Jaros\l{}aw Byrka were supported by the Polish National Science Centre (NCN) Grant 2020/39/B/ST6/01641. 
Sandip Banerjee acknowledges the support by SNSF Grant 200021 200731/1 and also the support of Polish National Science Centre (NCN) Grant 2020/39/B/ST6/01641 while at the University of Wroc\l{}aw, Poland.
Parinya Chalermsook, Kamyar Khodamoradi, and Joachim Spoerhase were supported by the European Research Council (ERC) under the European Union’s Horizon 2020 research and innovation programme (grant agreement No 759557).  
Ameet Gadekar was supported by the European Research Council (ERC) under the European Union’s Horizon 2020 research and innovation programme (grant agreement No 759557) while at Aalto University, and by the Israel Science Foundation (grant No. 1042/22).

 \bibliographystyle{plain}
 \bibliography{references}

\end{document}